\DeclareMathOperator*{\argmin}{arg\,min}
\DeclareMathOperator{\supp}{supp}
\newcommand{\set}[1]{\mathcal{#1}}
\newcommand{\rv}[1]{\mathsf{#1}}
\newcommand{\done}[2]{d_1\left( #1 , #2 \right)}
\newcommand{\E}[2]{\mathbb{E}_{#1}\left[#2\right]}
\newcommand{\Var}[2]{V_{#1}\left(#2\right)}
\newcommand{\entrp}[1]{H\left(#1\right)}
\newcommand{\crossentrp}[2]{X\left( #1 \Vert #2 \right)}
\newcommand{\diverg}[2]{D\left( #1 \Vert #2 \right)}
\newcommand{\Rinfo}{R_{\text{info}}}
\newcommand{\pmf}[1]{P_{\rv{#1}}}
\newcommand{\qmf}[1]{Q_{\rv{#1}}}
\newtheorem{theorem}{Theorem}
\newtheorem{lemma}{Lemma}
\newtheorem{corollary}{Corollary}
\newlength{\dotsize}
\newlength{\dotspread}
\tikzset{dotsize/.code={\setlength{\dotsize}{#1}},
         dotspread/.code={\setlength{\dotspread}{#1}}}
\tikzset{dotsize=.5pt,
         dotspread=3pt}
\pgfqpoint{\dotspread}{\dotspread}}
\theoremstyle{definition}
\newtheorem{example}{Example}
\begin{document}

\title{Divergence Scaling for Distribution Matching}

\author{%
  \IEEEauthorblockN{Gerhard~Kramer}
  \IEEEauthorblockA{Institute for Communications Engineering\\
                    Technical University of Munich\\
                    D-80333 Munich, Germany}
}

\maketitle

\begin{abstract}
Distribution matchers for finite alphabets are shown to have informational divergences that grow logarithmically with the block length,
generalizing a basic result for binary strings. 
\end{abstract}

\section{Introduction}
\label{sec:introduction}
Distribution matchers (DMs) are random number generators with a one-to-one mapping~\cite{schulte2020invertible}.
An important application of DMs is probabilistic amplitude shaping for energy-efficient communication~\cite{bocherer2015bandwidth}.
The objective of this paper is to prove that block (fixed-length) DMs cannot give low informational divergence
for finite alphabets, thereby generalizing a basic result of Schulte and Geiger~\cite{schulte2017divergence} for binary strings.
The main tool we use is a large deviation theorem of Bahadur and Ranga Rao~\cite{bahadur1960-2} that applies more generally than the binomial bounds and identities in~\cite{schulte2017divergence}.

This paper is organized as follows. 
Sec.~\ref{sec:preliminaries} introduces notation and a few lemmas.
Sec.~\ref{sec:bounds-sums} gives bounds on sums and establishes two further results, namely Lemma \ref{theorem:brr-size} that bounds the size of optimal DM codebooks and Theorem \ref{theorem:scaling-bounds} that bounds a divergence for average empirical distributions.
Sec.~\ref{sec:DM-scaling} develops the desired scaling results.
Appendixes~\ref{app:entropy-inequality2}-\ref{app:scaling-bounds} provide proofs of Lemmas and a Theorem.

\section{Preliminiaries}
\label{sec:preliminaries}
\subsection{Basic Notation}

Sets are written with calligraphic letters $\set{A}$ and the empty set with $\emptyset$.
The cardinality of $\set{A}$ is $|\set{A}|$ and the $n$-fold Cartesian product of $\set{A}$ is $\set{A}^n$. 
For sequences $f(n)$ and $g(n)$, $n=1,2,\dots$, the little-o notation $f(n)=o(g(n))$ means that
$\lim_{n\rightarrow\infty} f(n)/g(n)=0$, see~\cite[p.~61]{Landau1909}.

Random variables (RVs) are written with uppercase letters such as  $\rv{A}$ and
their realizations with lowercase letters $a$.
A probability mass function (pmf) of $\rv{A}$ is denoted by $\pmf{A}\in\set{P}$ or $\qmf{A}\in\set{P}$,
where $\set{P}$ is the set of pmfs for alphabet $\set{A}$.
We often discard subscripts on pmfs.
Pmfs or functions are also written as vectors, e.g., we write pmf $Q$ with
the $L$-letter alphabet $\set{A}=\{1,\ldots,L\}$ as $Q=[Q(1),\ldots,Q(L)]$.
The uniform pmf over a set of $K$ elements is denoted by $U_K$.

A random string $\rv{A}^n = \rv{A}_1\rv{A}_2\ldots\rv{A}_n$ has realizations $a^n = a_1a_2\ldots a_n \in \set{A}^n$.
The probability of a set $\set{S}\subseteq\set{A}^n$ of strings with respect to $P_{\rv{A}^n}$ is written as
\begin{equation}
    P_{\rv{A}^n}(\set{S}) = \sum_{a^n\in\set{S}} P_{\rv{A}^n}(a^n).
\end{equation}
We write $Q_{\rv{A}^n}=\qmf{A}^n$ for the pmf of a string of independent and
identically distributed (iid) RVs.

The $\ell_1$ distance between two pmfs $P$ and $Q$ on $\set{A}$ is
\begin{align}
    \done{P}{Q} = \sum_{a \in \set{A}} \left| P(a)-Q(a) \right|
\end{align}
and we have $\done{P}{Q}\le2$ with equality if and only if $\supp(P)\cap\supp(Q)=\emptyset$.
Of course, one may alternatively use the variational distance \cite[Ch.~11.6]{cover2006elements}.

The expectation of a real-valued function $f$ of a RV $\rv{A}$ with respect to the pmf $P$ is
denoted as
\begin{align}
  E_f(P) = \E{P}{f(\rv{A})} = \sum\limits_{a\in\supp(P)} P(a) f(a)
  \label{eq:expectation}
\end{align}
where $\supp(P)$ is the support of $P$, i.e., the set of $a\in\set{A}$ with $P(a)>0$.
The notation $E_f(P)$ is useful because we usually consider $P$ as the variable while $f$ is fixed.
Observe that $E_f(P)$ is linear in $P$.

The variance of $f(A)$ with respect to $P$ is $\Var{P}{f(A)}$ 
and the entropy of $P$ is $\entrp{P} = \E{P}{- \log_2 P(\rv{A})}$.
The cross entropy and divergence of pmfs $P$ and $Q$ with common alphabet $\set{A}$
are the respective
\begin{align}
  & \crossentrp{P}{Q} = \E{P}{- \log_2 Q(\rv{A})} \\
  & \diverg{P}{Q} = \E{P}{\log_2 \frac{P(\rv{A})}{Q(\rv{A})}}
\end{align}
and we have $\crossentrp{P}{Q}=\entrp{P}+\diverg{P}{Q} $.

\subsection{Empirical Probability}
Let $\set{P}_n\subseteq\set{P}$ be the set of pmfs that have rational values with
denominator $n$, also called the $n$-types. For a string $a^n$, let $n_i=n_i(a^n)$
be the number of occurrences of the letter $i$ for $i\in\set{A}$.
The empirical pmf, or type, of $a^n$ is
\begin{align}
   \pi(a^n)=\frac1n \left[ n_1,\dots,n_{L} \right]
   \label{eq:empirical}
\end{align}
and $\pi(a^n)\in\set{P}_n$. The number of $a^n$ with the same empirical pmf $\pi(a^n)$
is given by the multinomial
\begin{align}
    \binom{n}{n_1 \; \ldots \; n_{L}} = \frac{n!}{n_1! \dots n_L!}.
    \label{eq:type-class-size}
\end{align}

Let $\bar{n}_i = \sum_{a^n \in \set{A}^n} P_{\rv{A}^n}(a^n) \, n_i(a^n)$
be the average number of occurrences of letter $i$ for a specified $P_{\rv{A}^n}$.
The average empirical pmf is
\begin{align}
  \bar{\pi}
  = \sum_{a^n \in \set{A}^n} P_{\rv{A}^n}(a^n) \, \pi(a^n)
  = \frac1n [\bar{n}_1,\dots,\bar{n}_{L}] 
  \label{eq:Pavg}
\end{align}
and for any target pmf $\qmf{T}$ we have
\begin{align}
  \diverg{P_{\rv{A}^n}}{\qmf{T}^n}
  & = n \, \crossentrp{\bar{\pi}}{\qmf{T}} - \entrp{P_{\rv{A}^n}}.
  \label{eq:I-div-target}
\end{align}

\subsection{Four Lemmas}
\label{subsec:bounds-entropy-Idiv}

We state several lemmas; the first three are proved
in Appendixes~\ref{app:entropy-inequality2}-\ref{app:DM-Aary-increase}.
Let $p_{\rm min}$ and $p_{\rm max}$ be the respective minimum and maximum probabilities
of $P$, and define similar notation for the minima and maxima of the pmfs $Q$ and $R$.

\begin{lemma}
\label{lemma:entropy-inequality2}
Let $d_1=\done{P}{Q}$. If $d_1<2 p_{\rm min}$ then
\begin{align}
    \diverg{Q}{R} - \diverg{P}{R} \le \frac{d_1}{2} \log_2 \left( \frac{p_{\rm max}+d_1/2}{p_{\rm min}-d_1/2}
    \cdot \frac{r_{\rm max}}{r_{\rm min}} \right) . 
    \label{eq:entropy-inequality2}
\end{align}
\end{lemma}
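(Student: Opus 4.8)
The plan is to separate $\diverg{Q}{R}-\diverg{P}{R}$ into a cross-entropy contribution and an entropy contribution and bound each one. Using $\crossentrp{P}{R}=\entrp{P}+\diverg{P}{R}$ (and the same identity for $Q$) we get
\begin{align}
  \diverg{Q}{R}-\diverg{P}{R}
  = \big(\crossentrp{Q}{R}-\crossentrp{P}{R}\big)+\big(\entrp{P}-\entrp{Q}\big).
\end{align}
The first bracket is linear in the log-domain: $\crossentrp{Q}{R}-\crossentrp{P}{R}=\sum_{a\in\set{A}}(P(a)-Q(a))\log_2 R(a)$. Since $\sum_a (P(a)-Q(a))=0$, I may subtract any constant from $\log_2 R(a)$; taking that constant to be $\log_2\sqrt{r_{\rm min}r_{\rm max}}$ gives $\big|\log_2 R(a)-\log_2\sqrt{r_{\rm min}r_{\rm max}}\big|\le\tfrac12\log_2(r_{\rm max}/r_{\rm min})$ for all $a$, and since $\sum_a|P(a)-Q(a)|=d_1$ this yields $\crossentrp{Q}{R}-\crossentrp{P}{R}\le\tfrac{d_1}{2}\log_2(r_{\rm max}/r_{\rm min})$.

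The entropy contribution is the crux. Write $\phi(x)=-x\log_2 x$, which is concave on $[0,\infty)$, so $\entrp{P}-\entrp{Q}=\sum_a\big(\phi(P(a))-\phi(Q(a))\big)$. The tangent-line inequality for concave $\phi$ gives $\phi(P(a))-\phi(Q(a))\le\phi'(Q(a))(P(a)-Q(a))$ with $\phi'(x)=-\log_2 x-\log_2 e$; summing over $a$ and using $\sum_a(P(a)-Q(a))=0$ to drop the $\log_2 e$ term leaves
\begin{align}
  \entrp{P}-\entrp{Q}\le-\sum_{a\in\set{A}}\log_2 Q(a)\,(P(a)-Q(a)).
\end{align}
Now I use that each coordinate gap is small: the positive and the negative parts of $P-Q$ each sum to $d_1/2$, so $|P(a)-Q(a)|\le d_1/2$ and hence $Q(a)\in[p_{\rm min}-d_1/2,\,p_{\rm max}+d_1/2]$; the hypothesis $d_1<2p_{\rm min}$ is exactly what keeps the lower endpoint positive, so $Q$ is supported everywhere and $\log_2 Q(a)$ is finite. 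Subtracting the constant $\log_2\sqrt{(p_{\rm min}-d_1/2)(p_{\rm max}+d_1/2)}$ from $\log_2 Q(a)$ (again harmless, by the zero-sum property) bounds its absolute value by $\tfrac12\log_2\frac{p_{\rm max}+d_1/2}{p_{\rm min}-d_1/2}$, whence $\entrp{P}-\entrp{Q}\le\tfrac{d_1}{2}\log_2\frac{p_{\rm max}+d_1/2}{p_{\rm min}-d_1/2}$. Adding the two bounds and combining the logarithms gives \eqref{eq:entropy-inequality2}.

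The main obstacle is this entropy step: one must apply concavity in the direction that yields an \emph{upper} bound on $\entrp{P}-\entrp{Q}$, and must notice that $|P(a)-Q(a)|\le d_1/2$ (not merely $d_1$) — this is precisely what produces $p_{\rm min}-d_1/2$ and $p_{\rm max}+d_1/2$ in the statement and why the hypothesis reads $d_1<2p_{\rm min}$. Everything else — the cross-entropy estimate and the repeated ``subtract a constant'' trick exploiting $\sum_a(P(a)-Q(a))=0$ — is routine. A slick alternative to the tangent-line step is the mean value theorem, $\phi(P(a))-\phi(Q(a))=\phi'(\xi_a)(P(a)-Q(a))$ with $\xi_a$ between $P(a)$ and $Q(a)$ and hence also in $[p_{\rm min}-d_1/2,\,p_{\rm max}+d_1/2]$; it gives the same estimate, in fact a two-sided one for $|\entrp{P}-\entrp{Q}|$.
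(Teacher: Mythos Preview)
Your proof is correct and, once unwound, coincides with the paper's argument. The paper writes $Q=P+\Delta$ and expands directly as
\[
  \diverg{Q}{R}-\diverg{P}{R}=-\diverg{P}{Q}+\sum_{a}\Delta(a)\,\log_2\frac{Q(a)}{R(a)},
\]
drops the nonpositive term $-\diverg{P}{Q}$, and then splits the remaining sum by the sign of $\Delta(a)$, bounding $\log_2(Q(a)/R(a))$ via $Q(a)\in[p_{\min}-d_1/2,\,p_{\max}+d_1/2]$ and $R(a)\in[r_{\min},r_{\max}]$. Your tangent-line inequality for $\phi(x)=-x\log_2 x$ is exactly $-\diverg{P}{Q}\le 0$ in disguise, and your ``subtract a centering constant'' trick applied separately to $\log_2 R(a)$ and $\log_2 Q(a)$ reproduces the same bound the paper gets from the sign-split of the combined ratio. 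The only genuine difference is organizational: you isolate the cross-entropy and entropy contributions and bound them independently, which is slightly more modular (and your mean-value variant even yields a two-sided estimate on $|H(P)-H(Q)|$), whereas the paper handles the $Q$- and $R$-dependence jointly in one line.
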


Next, consider a threshold $I$ and define the pmf and string sets
$\set{E}=\{P: E_f(P)\le I\}$  and $\set{S}=\{a^n : \pi(a^n)\in\set{E}\}$.
We further define $P_{\rv{A}^n}(a^n)=Q^n(a^n)/Q^n(\set{S})$ for $a^n\in\set{S}$
and consider $\bar{\pi}$ to be a function of $I$.

\begin{lemma}
\label{lemma:d1-bound}
Consider any $P^*$ for which $E_f(P^*)=I$.  Then there is a $P\in\set{E}\cap\set{P}_n$ such that
\begin{align}
    & \done{P}{P^*} \le \frac{2 (L-1)}{n} . \label{eq:lemma-d1-bound}
\end{align}
\end{lemma}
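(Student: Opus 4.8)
The plan is to produce $P$ by rounding each coordinate of $P^*$ \emph{down} to a multiple of $1/n$ and then assigning all of the resulting leftover mass to a letter that minimizes $f$. Rounding down can only decrease $E_f$ coordinatewise, and concentrating the leftover on the $f$-smallest letter keeps the repair step from pushing $E_f$ back above $I$; the price is a factor $2$ relative to the usual type-approximation bound $L/n$.

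Concretely, I would set $n_i=\lfloor nP^*(i)\rfloor$ and $\delta_i=nP^*(i)-n_i\in[0,1)$ for each $i\in\set{A}$, and let $m=\sum_{i}\delta_i=n-\sum_i n_i$. Since $\sum_i n_i\le n$ and each $\delta_i<1$, the number $m$ is an integer with $0\le m\le L-1$. I would then pick $i_0\in\argmin_{i}f(i)$ (the minimum exists because $\set{A}$ is finite) and define $P(i_0)=(n_{i_0}+m)/n$ together with $P(i)=n_i/n$ for $i\ne i_0$. Then $\sum_i P(i)=(n-m)/n+m/n=1$ and every $nP(i)$ is a nonnegative integer at most $n$, so $P\in\set{P}_n$.

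To show $P\in\set{E}$, I would use linearity of $E_f$: since $P(i)-P^*(i)=-\delta_i/n$ for $i\ne i_0$ and $P(i_0)-P^*(i_0)=(m-\delta_{i_0})/n$,
\[
  E_f(P)-E_f(P^*)=\frac{m-\delta_{i_0}}{n}\,f(i_0)-\sum_{i\ne i_0}\frac{\delta_i}{n}\,f(i).
\]
Because $\delta_i\ge0$ and $f(i)\ge f(i_0)$ for all $i$, the subtracted sum is at least $f(i_0)\sum_{i\ne i_0}\delta_i/n=f(i_0)(m-\delta_{i_0})/n$, so $E_f(P)-E_f(P^*)\le0$ and hence $E_f(P)\le E_f(P^*)=I$; note that no sign assumption on $f$ is needed. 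For the distance bound, observe $m-\delta_{i_0}=\sum_{i\ne i_0}\delta_i\ge0$, so
\[
  \done{P}{P^*}=\sum_{i\ne i_0}\frac{\delta_i}{n}+\frac{m-\delta_{i_0}}{n}=\frac{2(m-\delta_{i_0})}{n}\le\frac{2m}{n}\le\frac{2(L-1)}{n}.
\]

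The argument is essentially bookkeeping once the construction is fixed, so I do not expect a serious obstacle; the points requiring care are the two inequalities $0\le m\le L-1$ and $0\le m-\delta_{i_0}$, on which both the $\set{P}_n$-membership and the $\ell_1$ bound rest, and the observation that putting \emph{all} the leftover mass on $i_0$ rather than spreading it — as in the standard type approximation — is exactly what makes the feasibility step $E_f(P)\le I$ go through.
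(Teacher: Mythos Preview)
Your proof is correct and is essentially the same as the paper's: the paper relabels so that $f(1)\le f(2)\le\cdots\le f(L)$, sets $P(i)=\lfloor nP^*(i)\rfloor/n$ for $i\ge2$ and $P(1)=1-\sum_{i\ge2}P(i)$, and then bounds $e(i)=P^*(i)-P(i)$ exactly as you do with your $\delta_i/n$ and $m$. Your write-up is slightly more explicit about the integer bookkeeping ($0\le m\le L-1$, $m-\delta_{i_0}\ge0$), but the construction and the two verifications (feasibility via $f(i)\ge f(i_0)$, distance via $\sum|e(i)|=2\sum_{i\ne i_0}\delta_i/n$) are identical.
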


\begin{lemma}
\label{lemma:DM-Aary-increase}
$E_f(\bar{\pi})$ is non-decreasing in $I$.
\end{lemma}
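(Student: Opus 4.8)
The plan is to use the linearity of $E_f$ to rewrite $E_f(\bar{\pi})$ as a conditional expectation of an empirical mean, and then to invoke the elementary fact that conditioning a random variable on a larger one-sided threshold event can only increase its mean.

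First I would unwind the definitions. Since $E_f$ is linear in its pmf argument and $\bar{\pi}=\sum_{a^n}P_{\rv{A}^n}(a^n)\,\pi(a^n)$ by \eqref{eq:Pavg}, we have $E_f(\bar{\pi})=\sum_{a^n\in\set{S}}P_{\rv{A}^n}(a^n)\,E_f(\pi(a^n))$. From \eqref{eq:empirical} and \eqref{eq:expectation},
\[
  E_f(\pi(a^n))=\sum_{i\in\set{A}}\frac{n_i(a^n)}{n}\,f(i)=\frac1n\sum_{k=1}^n f(a_k)=:g(a^n),
\]
the per-symbol empirical mean of $f$ along $a^n$. Consequently $\set{E}=\{P:E_f(P)\le I\}$ corresponds to $\set{S}=\set{S}(I)=\{a^n:g(a^n)\le I\}$, and because $P_{\rv{A}^n}$ is $Q^n$ restricted and renormalized to $\set{S}(I)$,
\[
  E_f(\bar{\pi})(I)=\frac{\sum_{a^n:\,g(a^n)\le I}Q^n(a^n)\,g(a^n)}{Q^n(\set{S}(I))}
  =\E{Q^n}{g(\rv{A}^n)\mid g(\rv{A}^n)\le I}.
\]

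It then suffices to show that $I\mapsto\E{Q^n}{g(\rv{A}^n)\mid g(\rv{A}^n)\le I}$ is non-decreasing over the range of $I$ for which $Q^n(\set{S}(I))>0$, so that $\bar{\pi}$ is defined. Fix $I_1<I_2$ in this range and abbreviate $\set{S}_j=\set{S}(I_j)$; then $\set{S}_1\subseteq\set{S}_2$. Every $a^n\in\set{S}_1$ satisfies $g(a^n)\le I_1$, so $E_f(\bar{\pi})(I_1)\le I_1$, while every $a^n\in\set{S}_2\setminus\set{S}_1$ satisfies $g(a^n)>I_1$. If $Q^n(\set{S}_2\setminus\set{S}_1)=0$ the two values coincide. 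Otherwise $E_f(\bar{\pi})(I_2)$ is a convex combination, with strictly positive weights $Q^n(\set{S}_1)/Q^n(\set{S}_2)$ and $Q^n(\set{S}_2\setminus\set{S}_1)/Q^n(\set{S}_2)$, of $E_f(\bar{\pi})(I_1)$ and $\E{Q^n}{g(\rv{A}^n)\mid \rv{A}^n\in\set{S}_2\setminus\set{S}_1}$; since the latter is at least $I_1\ge E_f(\bar{\pi})(I_1)$, we obtain $E_f(\bar{\pi})(I_1)\le E_f(\bar{\pi})(I_2)$, as desired.

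I do not expect a substantive obstacle: the argument is short. The only points that need care are bookkeeping ones -- confirming that $E_f(\pi(a^n))$ is indeed the empirical mean of $f$, and restricting to thresholds $I$ for which the renormalized pmf $P_{\rv{A}^n}$, and hence $\bar{\pi}$ and $E_f(\bar{\pi})$, is well defined, which is precisely the regime in which the convex-combination step is legitimate.
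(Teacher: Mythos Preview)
Your argument is correct and is essentially the same as the paper's: both write $E_f(\bar{\pi})$ as the $Q^n$-weighted average of $E_f(\pi(a^n))$ over $\set{S}(I)$, split $\set{S}(I_2)$ into $\set{S}(I_1)$ and its complement, and use that every string in the complement has empirical mean exceeding $I_1\ge E_f(\bar{\pi})(I_1)$. Your version is slightly more explicit in treating the degenerate case $Q^n(\set{S}_2\setminus\set{S}_1)=0$ and in spelling out the conditional-expectation interpretation, but the substance is identical.
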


Next define
\begin{align}
   P^* = \argmin_{P\in\set{E}} \diverg{P}{Q}.
   \label{eq:Pstar}
\end{align}
We have the following standard result, see e.g. \cite[Ch.~11.5]{cover2006elements}.

\begin{lemma} \label{lemma:Pstar-explicit}
Let $f_{\rm min}=\min_{a\in\supp(Q)} f(a)$ and suppose that $f_{\rm min} < I < E_f(Q)$.
Then the solution of \eqref{eq:Pstar} is
\begin{align}
   & P^*(a) = \frac{Q(a) \, 2^{-\tau f(a)}}{\sum_{b\in\set{A}} Q(b) \, 2^{-\tau f(b)}}
   \label{eq:Pstar-explicit}
\end{align}
for $a\in\set{A}$, where $\tau$ is positive and chosen so that 
$E_f(P^*)=I$. Observe that $P^*(a)=0$ if and only if $Q(a)=0$.
\end{lemma}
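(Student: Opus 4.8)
The plan is to treat \eqref{eq:Pstar} as a convex program and combine a Lagrangian computation with a Pythagorean-type identity for divergence. First I would note that $\diverg{P}{Q}$ is strictly convex in $P$ on the simplex and that $\set{E}$ is convex, so a minimizer exists and is unique. Since $E_f(Q)>I$, the pmf $Q$ itself (the unconstrained minimizer of $\diverg{P}{Q}$) does not lie in $\set{E}$, so the constraint is active at the optimum: $E_f(P^*)=I$. Since $f_{\rm min}<I$, there is a pmf supported on $\supp(Q)$ with all those entries positive that satisfies $E_f(P)=I$, i.e.\ Slater's condition holds, so the KKT conditions are necessary and sufficient for optimality.

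Next I would introduce a multiplier $\tau$ for $E_f(P)\le I$ and one for $\sum_a P(a)=1$, and set the partial derivatives of the Lagrangian in $P(a)$ (for $a\in\supp(Q)$) to zero. This gives $P^*(a)\propto Q(a)\,2^{-\tau f(a)}$, hence \eqref{eq:Pstar-explicit} after dividing by $Z(\tau)=\sum_{b\in\set{A}}Q(b)\,2^{-\tau f(b)}$; entries with $Q(a)=0$ give $P^*(a)=0$, and entries with $Q(a)>0$ give $P^*(a)>0$ because $f$ is finite there. To pin down $\tau$, I would observe that $\tau\mapsto E_f(P^*)$ is continuous, equals $E_f(Q)>I$ at $\tau=0$, is nonincreasing in $\tau$ (its derivative is $-\ln 2$ times the variance of $f$ under the tilted pmf), and tends to $f_{\rm min}<I$ as $\tau\to\infty$, since the tilted pmf concentrates on the minimizers of $f$ within $\supp(Q)$. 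The intermediate value theorem then yields a unique $\tau>0$ with $E_f(P^*)=I$.

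Finally, to verify that this $P^*$ really solves \eqref{eq:Pstar}, I would use the identity
\begin{align}
  \diverg{P}{Q} = \diverg{P}{P^*} + \sum_{a\in\set{A}} P(a)\log_2\frac{P^*(a)}{Q(a)}
\end{align}
valid whenever $\supp(P)\subseteq\supp(Q)$ (both sides being $+\infty$ otherwise). Substituting $\log_2\frac{P^*(a)}{Q(a)}=-\tau f(a)-\log_2 Z(\tau)$ makes the last sum equal to $-\tau E_f(P)-\log_2 Z(\tau)$, which for $P=P^*$ is exactly $\diverg{P^*}{Q}$. Hence for every $P\in\set{E}$,
\begin{align}
  \diverg{P}{Q} = \diverg{P}{P^*} + \diverg{P^*}{Q} + \tau\left(I-E_f(P)\right) \ge \diverg{P^*}{Q},
\end{align}
since $\tau>0$, $E_f(P)\le I$, and divergence is nonnegative, with equality iff $P=P^*$.

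The main obstacle is the bookkeeping around the sign and existence of $\tau$: one must use \emph{both} hypotheses, $f_{\rm min}<I$ to rule out $\tau=\infty$ (and to obtain Slater's condition) and $I<E_f(Q)$ to force $\tau>0$ rather than $\tau=0$, and it is precisely $\tau>0$ together with $E_f(P)\le I$ that makes the inequality-constrained Pythagorean bound close in the right direction. The convexity, the tilting form from the Lagrangian, and the identity itself are all routine.
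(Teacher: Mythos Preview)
Your argument is correct and is precisely the standard exponential-tilting/Pythagorean proof of the $I$-projection onto a linear half-space. The paper does not actually prove Lemma~\ref{lemma:Pstar-explicit}; it simply cites it as a known result from \cite[Ch.~11.5]{cover2006elements}, and your derivation is essentially the one found there, so there is nothing to contrast.

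Two minor remarks. First, for Slater's condition you want a strictly feasible point (some $P$ with $E_f(P)<I$ and full support on $\supp(Q)$), not $E_f(P)=I$; a convex combination of $Q$ with a point mass on $\arg\min_{a\in\supp(Q)} f(a)$ supplies one. Second, this is moot anyway: your final Pythagorean verification is self-contained and establishes optimality directly, so the KKT step is only heuristic motivation for the tilted form.
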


\section{Bounds for Sample Means}
\label{sec:bounds-sums}

Consider the sample mean
\begin{align}
  \rv{S}_n=\frac1n\sum_{i=1}^n f(\rv{A}_i).
  \label{eq:f-sum}
\end{align} 
We are interested in characterizing $\Pr[\rv{S}_n \le I]$. 
Observe that $\rv{S}_n = E_f( \pi(\rv{A}^n) )$ so that
$\rv{S}_n \le I$ is the same as requiring $\pi(A^n)$ to lie in
the half-space $\set{E}=\{P: E_f(P) \le I\}$. 

\subsection{Deviations of the Sample Mean}
\label{subsec:brr}

We distinguish between two classes of RVs:
\begin{enumerate}
  \item lattice: there are constants $c_0$ and $d>0$ such that $\set{A} \subseteq \{c_0 + j d: j \text{ an integer}\}$;
  \item non-lattice.
\end{enumerate}
For example, every real-valued binary RV is lattice because one may choose $d$ as the absolute difference between the two alphabet values. In general, one chooses $d$ as the greatest common divisor\footnote{The greatest common divisor of a set of real numbers is the largest $d$ such that all all numbers in the set are integer multiples of $d$, e.g., the set $\{1,5/3\}$ has greatest common divisor $d=1/3$.} of the differences between consecutive possible values of $\set{A}$, see~\cite[Sec.~4]{bahadur1960-2}. 
The following large deviation theorem is developed in~\cite[Thm.~1]{bahadur1960-2}, see also \cite[Ch.~3.7]{Dembo-Zeitouni-98}.

\begin{theorem}[Bahadur and Ranga Rao] \label{theorem:brr}
Consider the iid string $\rv{A}^n$ with pmf $Q^n$ and the real-valued
function $f$ with domain $\set{A}$.
Let $f_{\text{min}}=\min_{a\in\supp(Q)} f(a)$ and consider an $I$ satisfying
\begin{align}
  f_{\text{min}} < I <  \mathbb{E}_Q[f(\rv{A})] = E_f(Q)
  \label{eq:I-bounds}
\end{align}
and also $\Pr[\rv{S}_n=I]>0$ if $f(\rv{A})$ is a lattice RV. We have
\begin{align}
    \Pr\left[ \rv{S}_n \le I \right] = \frac{2^{-n \diverg{P^*}{Q}}}{\sqrt{2 \pi n}} \, b \, (1+o(1))
    \label{eq:brr-div}
\end{align}
where $P^*$ is given by \eqref{eq:Pstar-explicit}. The values
\begin{align}
  & b = \left\{ \begin{array}{ll}
  \frac{1}{\alpha} \, \frac{\tau d}{1-e^{-\tau d}}, & \text{if $f(\rv{A})$ is a lattice RV} \\ 1/\alpha, & \text{else}
  \end{array} \right.
  \label{eq:bI} \\
  & \alpha = \sqrt{\Var{P^*}{ \ln \frac{P^*(\rv{A})}{Q(\rv{A})}}}
  \label{eq:alpha}
\end{align}
satisfy $0<b<\infty$ and $0<\alpha<\infty$, and $\tau$ is as in Lemma~\ref{lemma:Pstar-explicit}.
Moreover, given $I$ the $b$, $\alpha$, and $\tau$ are independent of $n$.
However, if $f(\rv{A})$ is lattice and $\Pr[\rv{S}_n=I]=0$ then one must replace
$b$ with $b_n=be^{-\tau d\, \theta_n}$ where $0\le\theta_n<1$. Thus, the right-hand side
of \eqref{eq:brr-div} is an upper bound on $\Pr\left[ \rv{S}_n \le I \right]$.
\end{theorem}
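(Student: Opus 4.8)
The plan is to follow the classical argument of Bahadur and Ranga Rao: an exponential (Cram\'er) change of measure that extracts the factor $2^{-n\diverg{P^*}{Q}}$, followed by a local limit theorem for the tilted sum that produces the $1/\sqrt{2\pi n}$ term together with the constant $b$.

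Let $P^*$ be the tilted pmf with $P^*(a)$ proportional to $Q(a)\,e^{-\tau f(a)}$, where $\tau>0$ is chosen so that $E_f(P^*)=I$; existence, uniqueness and positivity of $\tau$ are as in Lemma~\ref{lemma:Pstar-explicit} (and, up to the base of the exponential, $P^*$ is the pmf there). Put $M=\sum_{b\in\set{A}}Q(b)\,e^{-\tau f(b)}$. Since $Q(a)/P^*(a)=M\,e^{\tau f(a)}$ on $\supp(Q)$, replacing $Q^n$ by $(P^*)^n$ inside the event $\{\rv{S}_n\le I\}$ yields
\begin{align}
  \Pr\left[\rv{S}_n\le I\right]
  = M^{n}\,e^{n\tau I}\;\E{(P^*)^n}{e^{\tau W_n}\,\mathbf{1}\{W_n\le 0\}} ,
  \label{eq:bahadur-tilt}
\end{align}
where $W_n=\sum_{i=1}^{n}f(\rv{A}_i)-nI$. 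Using $E_f(P^*)=I$ one checks that $M^{n}e^{n\tau I}=2^{-n\diverg{P^*}{Q}}$, which is the exponential factor in \eqref{eq:brr-div}. So the claim reduces to showing that the residual expectation in \eqref{eq:bahadur-tilt} equals $b/\sqrt{2\pi n}\,(1+o(1))$.

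Under $(P^*)^n$ the RV $W_n$ is a centered sum of $n$ iid terms with variance $n\sigma^2$, where $\sigma^2=\Var{P^*}{f(\rv{A})}$; and since $\ln\!\big(P^*(a)/Q(a)\big)=-\tau f(a)-\ln M$ we have $\alpha=\sqrt{\Var{P^*}{\ln(P^*(\rv{A})/Q(\rv{A}))}}=\tau\sigma$. As $\tau>0$ and $W_n\le 0$ on the event, the weight $e^{\tau W_n}$ is effectively supported on $W_n=O(1)$, a range over which a local limit theorem says the law of $W_n$ behaves like the Gaussian density of variance $n\sigma^2$ near its peak, i.e.\ of size $1/(\sigma\sqrt{2\pi n})$. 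In the non-lattice case the residual is then asymptotically $\frac{1}{\sigma\sqrt{2\pi n}}\int_{-\infty}^{0}e^{\tau w}\,dw=\frac{1}{\alpha\sqrt{2\pi n}}$, so $b=1/\alpha$. In the lattice case $W_n$ is supported on $d\mathbb{Z}-d\theta_n$ with $\theta_n\in[0,1)$, and $\theta_n=0$ exactly when $\Pr[\rv{S}_n=I]>0$; the point masses near $0$ have size $d/(\sigma\sqrt{2\pi n})$, so the integral is replaced by $\sum_{k\ge0}e^{-\tau d(k+\theta_n)}=e^{-\tau d\theta_n}/(1-e^{-\tau d})$, giving $b_n=b\,e^{-\tau d\theta_n}$ with $b=\frac{1}{\alpha}\cdot\frac{\tau d}{1-e^{-\tau d}}$; since $\theta_n\ge0$, replacing $b_n$ by $b$ turns \eqref{eq:brr-div} into an upper bound. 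Finally $0<\sigma<\infty$ because $\set{A}$ is finite and $f_{\rm min}<I<E_f(Q)$ forces $f$ to be non-constant on $\supp(P^*)=\supp(Q)$, whence $0<\alpha<\infty$ and $0<b<\infty$.

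The hard part is making the local approximations quantitative: one needs a local limit theorem whose error is uniform over the relevant range of $w$ (Stone's theorem in the non-lattice case, Gnedenko's in the lattice case), or equivalently the characteristic-function/Fourier-inversion estimates of \cite{bahadur1960-2}, so that the error that remains after integrating or summing against $e^{\tau w}$ is $o(1)$ relative to $b/\sqrt{2\pi n}$. The exponential weight localises the problem and suppresses the tails, so a pointwise local central limit theorem at resolution $1/\sqrt{n}$ is enough; carrying this through rigorously is precisely \cite[Thm.~1]{bahadur1960-2}, which one may also simply invoke.
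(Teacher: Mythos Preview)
The paper does not prove this theorem at all: it is quoted verbatim from Bahadur and Ranga Rao \cite{bahadur1960-2} (with a pointer to \cite{Dembo-Zeitouni-98}) and used as a black box. So there is nothing to compare against; your sketch is simply a correct outline of the classical proof, and your closing remark that one ``may also simply invoke'' \cite[Thm.~1]{bahadur1960-2} is exactly what the paper does.

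For the record, the computations you give are right: the tilting identity \eqref{eq:bahadur-tilt} and the evaluation $M^n e^{n\tau I}=2^{-n\diverg{P^*}{Q}}$ are correct, as is $\alpha=\tau\sigma$; the non-lattice integral $\int_{-\infty}^0 e^{\tau w}\,dw=1/\tau$ and the lattice geometric sum $\sum_{k\ge 0}e^{-\tau d(k+\theta_n)}$ yield precisely the constants in \eqref{eq:bI} and the correction $b_n=b\,e^{-\tau d\theta_n}$. Your parenthetical ``up to the base of the exponential'' is worth keeping in mind: the $\tau$ appearing inside $e^{-\tau d}$ in \eqref{eq:bI} must be the natural-log tilting parameter you use, which differs by a factor $\ln 2$ from the base-$2$ parameter of Lemma~\ref{lemma:Pstar-explicit}; your convention is the one that makes \eqref{eq:bI} literally correct.
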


\begin{corollary}
\label{corollary:brr}
Choosing $Q=U_{L}$ in Theorem~\ref{theorem:brr}, \eqref{eq:brr-div} becomes
\begin{align}
    \Pr\left[ \rv{S}_n \le I \right] = \frac{L^{-n} \, 2^{n \entrp{P^*}}}{\sqrt{2 \pi n}} \,b\,(1+o(1))
    \label{eq:brr-bound}
\end{align}
where the value $b$ is as in~\eqref{eq:bI} and
\begin{align}
  \alpha = \sqrt{\Var{P^*}{\ln P^*(\rv{A})}}.
\end{align}
Furthermore, $P^*(a)>0$ for all $a\in\set{A}$.
\end{corollary}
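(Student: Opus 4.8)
The plan is to substitute $Q=U_{L}$ into every quantity appearing in Theorem~\ref{theorem:brr} and simplify. First I would record that $U_{L}(a)=1/L$ for all $a\in\set{A}$, so that $\supp(U_{L})=\set{A}$ and the iid string pmf is the constant $U_{L}^n(a^n)=L^{-n}$. In particular $f_{\text{min}}=\min_{a\in\supp(U_{L})}f(a)=\min_{a\in\set{A}}f(a)$, and the admissibility condition~\eqref{eq:I-bounds} reads $f_{\text{min}}<I<E_f(U_{L})$, so Theorem~\ref{theorem:brr} applies verbatim with this choice of $Q$.

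Next I would rewrite the exponent. For any pmf $P$ on $\set{A}$ we have
\[
  \diverg{P}{U_{L}} = \sum_{a\in\set{A}} P(a)\log_2\frac{P(a)}{1/L} = \log_2 L - \entrp{P},
\]
so with $P=P^*$ this gives $2^{-n\diverg{P^*}{U_{L}}}=L^{-n}\,2^{n\entrp{P^*}}$, which turns the right-hand side of~\eqref{eq:brr-div} into the expression in~\eqref{eq:brr-bound}.

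Then I would treat the constants $b$, $\alpha$, $\tau$, and $d$. The lattice spacing $d$ depends only on $\set{A}$ and $f$, and the tilting parameter $\tau$ of Lemma~\ref{lemma:Pstar-explicit} depends only on $Q$ through its support together with $f$ and $I$; since $\supp(U_{L})=\set{A}$, both are the same as for a generic full-support $Q$, and in particular $b$ retains the form~\eqref{eq:bI}. For $\alpha$ in~\eqref{eq:alpha}, note that $\ln\frac{P^*(a)}{U_{L}(a)}=\ln P^*(a)+\ln L$ differs from $\ln P^*(a)$ only by the additive constant $\ln L$, and variance is invariant under adding a constant, so $\Var{P^*}{\ln\frac{P^*(\rv{A})}{U_{L}(\rv{A})}}=\Var{P^*}{\ln P^*(\rv{A})}$; the bounds $0<\alpha<\infty$ are inherited from Theorem~\ref{theorem:brr}.

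Finally, for the full-support claim I would invoke the last sentence of Lemma~\ref{lemma:Pstar-explicit}: $P^*(a)=0$ if and only if $Q(a)=0$. Since $U_{L}(a)=1/L>0$ for every $a\in\set{A}$, it follows that $P^*(a)>0$ for all $a\in\set{A}$. I do not expect a genuine obstacle here, as everything is a direct substitution; the only mildly delicate points are spelling out the variance identity for $\alpha$ and explaining why the lattice and tilting constants are unaffected by specializing to $Q=U_{L}$.
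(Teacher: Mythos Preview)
Your proposal is correct and is exactly the direct substitution the paper intends; the corollary is stated there without proof as an immediate consequence of Theorem~\ref{theorem:brr}, and your three ingredients---the identity $\diverg{P^*}{U_L}=\log_2 L-\entrp{P^*}$ for the exponent, the additive-constant variance identity for $\alpha$, and the support remark in Lemma~\ref{lemma:Pstar-explicit} for $P^*(a)>0$---are precisely what is needed. The only slip is your aside that $\tau$ depends on $Q$ only through its support (this is false for general $Q$), but it is inconsequential: $b$ retaining the form~\eqref{eq:bI} is automatic because you are simply applying Theorem~\ref{theorem:brr} with whichever $\tau$ and $\alpha$ arise for $Q=U_L$.
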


\subsection{Example}
\label{subsec:multinomial-sums}

We apply Corollary~\ref{corollary:brr} to an example that corresponds to optimal DM~\cite{schulte2020invertible}.
Consider a target pmf $\qmf{T}$, e.g., a shaping pmf for energy-efficient communication.
Suppose we wish to count the number of $a^n$ with at least a specified probability with respect
to $\qmf{T}^n$, i.e., the number of $a^n$ satisfying
\begin{align}
    \qmf{T}^n(a^n) 
    \ge 2^{-n I}
    \label{eq:prob_condition}
\end{align}
and where equality holds for some $a^n$.
The bound \eqref{eq:prob_condition} is the same as
\begin{align}
    \frac{1}{n} \sum_{i=1}^n -\log_2 \qmf{T}(a_i) = \crossentrp{\pi(a^n)}{\qmf{T}} \le I
    \label{eq:prob_condition2}
\end{align}
which has the same form as \eqref{eq:f-sum} with $f(a)=-\log_2 \qmf{T}(a)$.
We thus restrict attention to the interval (see~\eqref{eq:I-bounds})
\begin{align}
   -\log_2 t_{\rm max} < I < E_f(Q) = \crossentrp{Q}{\qmf{T}}
  \label{eq:I-bounds2}
\end{align}
where $t_{\rm max}=\max_{a\in\set{A}}\qmf{T}(a)$. 

Now consider the pmf set $\set{E}=\{P:\crossentrp{P}{\qmf{T}}\le I\}$ and
the corresponding string set  $\set{S}=\{a^n : \pi(a^n)\in\set{E}\}$. We have
\begin{align}
    |\set{S}| = \sum_{P\in\set{E}\cap\set{P}_n} \binom{n}{nP(1)\ldots nP(L)}.
    \label{eq:multinomial_sum}
\end{align}
Normalizing \eqref{eq:multinomial_sum}  by $L^n$, one obtains a probability with respect to the uniform pmf over all strings in $\set{A}^n$.
We thus choose $\rv{A}^n$ with pmf $Q^n$ where $Q=U_L$ and can write
\begin{align}
    |\set{S}|
    = L^n \cdot \Pr\left[ \rv{S}_n \le I \right] .
    \label{eq:set-size-crossentrp}
\end{align}
Corollary~\ref{corollary:brr} gives the following result.

\begin{lemma} \label{theorem:brr-size}
The set $\set{S}$ of $a^n$ satisfying \eqref{eq:prob_condition2} has cardinality
\begin{align}
    |\set{S}| = \frac{2^{n \entrp{P^*}}}{\sqrt{2 \pi n}} \, b \, (1+o(1))
    \label{eq:brr-size}
\end{align}
where $P^*$ is given by \eqref{eq:Pstar-explicit}.
\end{lemma}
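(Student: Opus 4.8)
The plan is to combine the two identities already assembled in the excerpt. Equation~\eqref{eq:set-size-crossentrp} reduces the counting problem to computing $\Pr[\rv{S}_n \le I]$ for the sample mean $\rv{S}_n = \crossentrp{\pi(\rv{A}^n)}{\qmf{T}}$ under the uniform pmf $Q = U_L$, and Corollary~\ref{corollary:brr} gives exactly this probability as $L^{-n} 2^{n\entrp{P^*}} b (1+o(1))/\sqrt{2\pi n}$. So the bulk of the work is simply verifying that the hypotheses of Corollary~\ref{corollary:brr} (equivalently, of Theorem~\ref{theorem:brr}) are met for the function $f(a) = -\log_2 \qmf{T}(a)$ and the threshold $I$ under consideration, after which multiplying through by $L^n$ cancels the $L^{-n}$ and yields~\eqref{eq:brr-size}.

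The key steps, in order, are as follows. First, identify $f(a) = -\log_2 \qmf{T}(a)$ and note that with $Q = U_L$ we have $f_{\rm min} = \min_{a} f(a) = -\log_2 t_{\rm max}$ and $E_f(Q) = \crossentrp{U_L}{\qmf{T}}$, so the interval restriction~\eqref{eq:I-bounds2} is precisely the condition $f_{\rm min} < I < E_f(Q)$ required by Theorem~\ref{theorem:brr}; the hypothesis that equality holds in~\eqref{eq:prob_condition} for some $a^n$ guarantees $\Pr[\rv{S}_n = I] > 0$ in the lattice case, so the theorem's equality form~\eqref{eq:brr-div} (not merely the upper bound) applies. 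Second, invoke Corollary~\ref{corollary:brr} to write $\Pr[\rv{S}_n \le I] = L^{-n} 2^{n\entrp{P^*}} b (1+o(1))/\sqrt{2\pi n}$, with $P^*$ given by~\eqref{eq:Pstar-explicit} and $b$, $\alpha$ as specified there; the corollary also certifies $0 < b < \infty$. Third, substitute into~\eqref{eq:set-size-crossentrp}: $|\set{S}| = L^n \cdot \Pr[\rv{S}_n \le I]$, and the $L^n$ cancels $L^{-n}$, leaving~\eqref{eq:brr-size}.

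The main subtlety — and the only place the argument is not purely mechanical — is the lattice-versus-non-lattice dichotomy for $f(\rv{A}) = -\log_2 \qmf{T}(\rv{A})$, since whether the spacings $\{-\log_2 \qmf{T}(a) : a \in \set{A}\}$ lie on an arithmetic lattice depends on the numerical values of $\qmf{T}$. Fortunately Corollary~\ref{corollary:brr} (via Theorem~\ref{theorem:brr}) already packages both cases into a single constant $b$, with the lattice case handled by the extra factor $\tfrac{\tau d}{1 - e^{-\tau d}}$ and the requirement $\Pr[\rv{S}_n = I] > 0$, which we have imposed by assuming equality holds in~\eqref{eq:prob_condition} for some string. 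Hence no case split is visible in the final statement: one only needs to observe that the hypothesis "equality holds for some $a^n$" is exactly what removes the $b_n = b e^{-\tau d\,\theta_n}$ caveat and restores the clean asymptotic equality, so~\eqref{eq:brr-size} holds verbatim in both regimes.
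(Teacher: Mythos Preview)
Your proposal is correct and follows exactly the paper's approach: the lemma is stated as an immediate consequence of Corollary~\ref{corollary:brr} applied to \eqref{eq:set-size-crossentrp}, and your write-up simply spells out the hypothesis-checking (the interval~\eqref{eq:I-bounds2} and the lattice caveat via the assumed equality in~\eqref{eq:prob_condition}) that the paper leaves implicit.
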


\begin{example} \label{example}
Fig.~\ref{fig:psimplex2} illustrates the probability simplex for $\set{A}=\{1,2,3\}$ and $n=10$.
Consider $Q=U_L$ and the lattice RV $f(\rv{A})$ with
\begin{align}
  f=-\log_2\qmf{T}=c_0+[0,1,8/3]
\end{align}
where $c_0\approx0.7290$, $\qmf{T}\approx[0.6033,0.3017,0.0950]$, and $d=1/3$. The bounds~\eqref{eq:I-bounds2} specify $c_0<I<c_0+11/9$. Consider $I\approx1.5290$ and $n=10$ so that $P_1=[7,0,3]/10$ and $P_2=[2,8,0]/10$ are both in $\set{P}_n$ and satisfy the constraint $E_f(P)=\crossentrp{P}{\qmf{T}}\le I$ with equality. We compute $P^*\approx[0.4915,0.3336,0.1749]$ and $\entrp{P^*}\approx 1.4720$. Consider the strings
\begin{align}
  & \hat{a}^n   = 1,1,1,1,1,1,2,2,3,3 \\
  & \tilde{a}^n = 1,1,1,1,1,2,2,2,2,3
\end{align}
with empirical pmfs $\pi(\hat{a}^n)=[6,2,2]/10$ and $\pi(\tilde{a}^n)=[5,4,1]/10$. The pmf $\pi(\hat{a}^n)$ maximizes the entropy in $\set{E}\cap\set{P}_n$ and $\pi(\tilde{a}^n)$ has the smallest $\ell_1$ distance to $P^*$ in $\set{E}\cap\set{P}_n$.
The entropies are $\entrp{\pi(\hat{a}^n)}\approx1.3710$ and $\entrp{\pi(\tilde{a}^n)}\approx1.3610$.
The $\ell_1$ distances are
$\done{P_1}{P^*}\approx0.6672$, $\done{P_2}{P^*}\approx0.9328$,
$\done{\pi(\hat{a}^n)}{P^*}\approx0.2672$, and $\done{\pi(\tilde{a}^n)}{P^*}\approx0.1498$.
\end{example}

\begin{figure}[t]
    \centering

\pgfdeclarelayer{background}
\pgfdeclarelayer{foreground}
\pgfsetlayers{background,main,foreground}

\begin{tikzpicture}
\begin{ternaryaxis}[
	xlabel=$P(1)$,
	ylabel=$P(2)$,
	zlabel=$P(3)$,
	minor tick num=1, 
	grid=both,
	ternary limits relative=false,
	xmin=0,
    xmax=1,
    ymin=0,
    ymax=1,
    zmin=0,
    zmax=1,
    clip=false,
]
    \begin{pgfonlayer}{background}
    	\addplot3[fill=red!10] coordinates {
            (0.2, 0.8, 0.0)
            (0.7, 0.0, 0.3)
            (1.0, 0.0, 0.0)
    	};
	\end{pgfonlayer}{background}
	\draw[-] (0.2, 0.8, 0.0) -- (0.7, 0.0, 0.3); 
	
	\node [pin={[pin edge={black,thick,-}, black]-90:uniform},thick,draw=black] at (0.3333,0.3333) {};
	\node [pin={[pin edge={black,thick,-}, black]80:$\qmf{T}$},thick,draw=red] at (0.6033,0.3017) {};
	\node [pin={[pin edge={black,thick,-}, black]0:$P_1$},thick,draw=black] at (0.7,0) {};
	\node [pin={[pin edge={black,thick,-}, black]-10:$P_2$},thick,draw=black] at (0.2,0.8) {};
	\node [pin={[pin edge={black,thick,-}, black]-10:$P^*$},thick,draw=red] at (0.4917,0.3333) {};
	\node [pin={[pin edge={black,thick,-}, black]-30:$\pi(\hat{a}^n)$},thick,draw=blue] at (0.6,0.2) {};
	\node [pin={[pin edge={black,thick,-}, black]-90:$\quad \pi(\tilde{a}^n)$},thick,draw=blue] at (0.5,0.4) {};
\end{ternaryaxis}
\end{tikzpicture}
    \caption{Probability simplex for $\set{A}=\{1,2,3\}$ with grid lines for $n=10$.
    The red shaded area shows the set $\set{E}$ of pmfs $P$ satisfying $\crossentrp{P}{\qmf{T}}\le I$ where  $\qmf{T}\approx[0.6033,0.3017,0.0950]$ and $I\approx1.5290$.
    }
    \label{fig:psimplex2}
\end{figure}
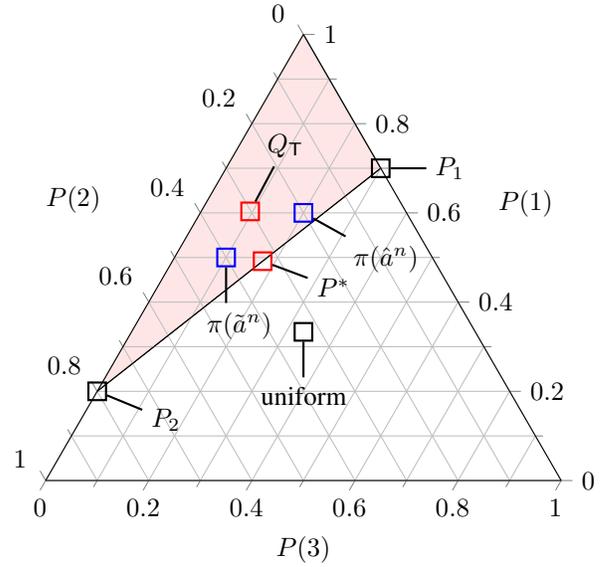

\subsection{Convergence of the Average Empirical Distribution}
\label{subsec:average-epmf-scaling}
We develop a scaling result for the conditional limit theorem~\cite[Ch.~11.6]{cover2006elements}.
Let $p_{\rm min}^*$ and $p_{\rm max}^*$ be the respective minimum and maximum probabilities of $P^*$.
Suppose that $q_{\rm min}>0$ so that $p_{\rm min}^*>0$ by Lemma~\ref{lemma:Pstar-explicit}.
Define the function
\begin{align}
  \Delta(P) & = \diverg{P}{Q}-\diverg{P^*}{Q}. \label{eq:DP}
\end{align}
Observe that $\Delta(P)$ is bounded and convex in $P$,
and non-negative for $P\in\set{E}$. We have the following scaling result.

\begin{theorem} \label{theorem:scaling-bounds}
Consider the iid string $\rv{A}^n$ with pmf $Q^n$.
If $I$ satisfies \eqref{eq:I-bounds} then we have
\begin{align}
  \diverg{\bar{\pi}}{P^*} \le \Delta(\bar{\pi}) & < \frac{1}{n} \, 2^{\tilde{c}+1}\,(1+o(1))
  \label{eq:diverg1-DMbound}
\end{align}
where
\begin{align}
   \tilde{c} = (L-1) \log_2 \frac{ p_{\rm max}^* \, q_{\rm max} }{ p_{\rm min}^* \, q_{\rm min} } .
   \label{eq:limc}
\end{align}
\end{theorem}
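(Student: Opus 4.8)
The plan is to get the left inequality from the explicit form of $P^*$ and the right inequality by feeding a quantitative version of the conditional limit theorem into Lemma~\ref{lemma:entropy-inequality2}.

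\emph{Left inequality.} Set $Z=\sum_{b}Q(b)2^{-\tau f(b)}$, so by Lemma~\ref{lemma:Pstar-explicit} one has $P^*(a)=Q(a)2^{-\tau f(a)}/Z$ and $E_f(P^*)=I$. From $\log_2\bigl(\bar\pi(a)/P^*(a)\bigr)=\log_2\bigl(\bar\pi(a)/Q(a)\bigr)+\tau f(a)+\log_2 Z$, averaging over $\bar\pi$ gives $\diverg{\bar\pi}{P^*}=\diverg{\bar\pi}{Q}+\tau E_f(\bar\pi)+\log_2 Z$, and the same computation at $P^*$ gives $0=\diverg{P^*}{Q}+\tau I+\log_2 Z$. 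Subtracting, $\diverg{\bar\pi}{P^*}=\Delta(\bar\pi)-\tau\bigl(I-E_f(\bar\pi)\bigr)$. Since $\bar\pi$ is a convex combination of the types $\pi(a^n)$ of the strings $a^n\in\set{S}$, each of which lies in the convex set $\set{E}$, we have $\bar\pi\in\set{E}$, so $E_f(\bar\pi)\le I$; as $\tau>0$, this is the left inequality. (Equivalently, this is the Pythagorean identity for the I-projection $P^*$ of $Q$ onto $\set{E}$.)

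\emph{Right inequality.} I would apply Lemma~\ref{lemma:entropy-inequality2} with $(P,Q,R)\leftarrow(P^*,\bar\pi,Q)$: once $d_1:=\done{\bar\pi}{P^*}<2p_{\rm min}^*$ it bounds $\Delta(\bar\pi)=\diverg{\bar\pi}{Q}-\diverg{P^*}{Q}$ by $\tfrac{d_1}{2}\log_2\bigl(\tfrac{p_{\rm max}^*+d_1/2}{p_{\rm min}^*-d_1/2}\cdot\tfrac{q_{\rm max}}{q_{\rm min}}\bigr)$, so everything reduces to a bound $d_1=O(1/n)$ with the correct constant: then $d_1\to0$ turns the logarithm into $\log_2\bigl(p_{\rm max}^*q_{\rm max}/(p_{\rm min}^*q_{\rm min})\bigr)(1+o(1))=\tfrac{\tilde c}{L-1}(1+o(1))$ and the product becomes $\tfrac1n\,2^{\tilde c+1}(1+o(1))$.

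To bound $d_1$, write $\bar\pi=\bigl(\sum_{P\in\set{E}\cap\set{P}_n}w_P\,P\bigr)/W$, where $w_P=\binom{n}{nP(1)\ldots nP(L)}\prod_a Q(a)^{nP(a)}$ is the $Q^n$-probability of the type class of $P$ and $W=\sum_P w_P=Q^n(\set{S})=\Pr[\rv{S}_n\le I]$; fix via Lemma~\ref{lemma:d1-bound} a type $\hat P\in\set{E}\cap\set{P}_n$ with $\done{\hat P}{P^*}\le 2(L-1)/n$, so that $d_1\le 2(L-1)/n+\tfrac1W\sum_P w_P\done{P}{\hat P}$. In this sum, types $P$ with $\Delta(P)$ bounded away from $0$ are at $\ell_1$-distance $\Omega(1)$ from $P^*$ and carry $w$-weight exponentially negligible against $W$, by Theorem~\ref{theorem:brr}; the bulk of the weight sits within $O(1/\sqrt n)$ of $P^*$, and there the key point is that the weighted mean $\bar\pi$ is displaced from $P^*$ only by $O(1/n)$ — not $O(1/\sqrt n)$ — because the quadratic part of $\Delta$ near $P^*$ contributes no first-order bias and only the linear cutoff from $E_f(P)\le I$ does. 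Tracking the weight ratios makes $\tilde c$ appear: transporting one unit of mass of size $1/n$ between two letters multiplies $w_P$ by a factor in $[p_{\rm min}^*/p_{\rm max}^*,\,p_{\rm max}^*/p_{\rm min}^*]$ coming from the multinomial coefficient and one in $[q_{\rm min}/q_{\rm max},\,q_{\rm max}/q_{\rm min}]$ coming from $\prod_a Q(a)^{nP(a)}$, and the anchor $\hat P$ is within $L-1$ such units of $P^*$ (Lemma~\ref{lemma:d1-bound}), so across the relevant neighborhood $w_P$ varies by at most a factor $2^{\tilde c}$; combining this with the Bahadur--Ranga Rao value of $W$ yields $d_1=O(1/n)$ with a constant matching the claimed bound.

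\emph{Main obstacle.} The delicate part is making the $d_1$-constant explicit: comparing $\sum_P w_P\done{P}{\hat P}$ with $W=Q^n(\set{S})$ without losing polynomial-in-$n$ factors forces Stirling-type estimates of the multinomial coefficients rather than the crude $\binom{n}{nP}\le 2^{nH(P)}$, together with a matching lower bound on $Q^n(\set{S})$. A further nuisance is the lattice case of Theorem~\ref{theorem:brr} with $\Pr[\rv{S}_n=I]=0$, where only an upper bound on $Q^n(\set{S})$ is available; there one passes to the largest attainable level $I'=I-O(1/n)$ below $I$ and uses that $\diverg{P^*_{I'}}{Q}$ differs from $\diverg{P^*}{Q}$ by $O(1/n)$, which follows from smoothness of the rate function (Lemma~\ref{lemma:Pstar-explicit}).
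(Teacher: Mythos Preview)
Your left inequality is exactly the paper's: the explicit exponential-family form of $P^*$ gives the Pythagorean identity $\diverg{\bar\pi}{P^*}=\Delta(\bar\pi)-\tau\bigl(I-E_f(\bar\pi)\bigr)$, and the paper simply cites this.

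For the right inequality your route is genuinely different. The paper never estimates $\done{\bar\pi}{P^*}$. It applies convexity directly to $\Delta$, obtaining $\Delta(\bar\pi)\le\sum_{a^n\in\set S}P_{\rv A^n}(a^n)\,\Delta(\pi(a^n))$, then partitions $\set S$ into level sets $\set S_j=\{a^n: j\delta\le\Delta(\pi(a^n))<(j+1)\delta\}$ and bounds the conditional probability of each superlevel set $\set T_j=\cup_{i\ge j}\set S_i$ via Theorem~\ref{theorem:brr}. Lemmas~\ref{lemma:entropy-inequality2}--\ref{lemma:d1-bound} enter only to compare the continuous minimizer $P_j^*$ on $\set T_j$ with its nearest type, yielding $P_{\rv A^n}(\set T_j)\le 2^{-n\Delta(P_j^*)}\,b_r(1+o(1))\le 2^{-j}\,2^{c_j}\,b_r(1+o(1))$ once $\delta=1/n$; the geometric series then sums to $\frac{2}{n}2^{\tilde c}(1+o(1))$, which is where the constant $2^{\tilde c+1}$ comes from.

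Your argument has a genuine gap at the step you yourself flag. You bound $\done{\bar\pi}{\hat P}\le\frac1W\sum_P w_P\,\done{P}{\hat P}$ by convexity of the $\ell_1$ norm, but this discards precisely the cancellation you invoke two sentences later. For $L\ge 3$ the conditional law of $\pi$ given $\pi\in\set E$ has spread $\Theta(1/\sqrt n)$ in the directions tangent to the hyperplane $E_f(P)=I$ (tilted CLT), so the right-hand side is $\Theta(1/\sqrt n)$, not $O(1/n)$; feeding that into Lemma~\ref{lemma:entropy-inequality2} yields only $\Delta(\bar\pi)=O(1/\sqrt n)$. To actually get $\done{\bar\pi}{P^*}=O(1/n)$ one must estimate $\bar\pi-P^*$ coordinatewise, exploiting the approximate symmetry of the tilted law about $P^*$ in the tangent directions---a second-order computation considerably harder than anything in the paper's proof. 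The paper's device avoids this entirely because $\Delta(\pi)$ vanishes \emph{quadratically} at $P^*$, so $E\bigl[\Delta(\pi)\mid\pi\in\set E\bigr]$ is automatically $O(1/n)$, whereas $\done{\pi}{\hat P}$ vanishes only linearly.

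A secondary mismatch: even granting $d_1=O(1/n)$, Lemma~\ref{lemma:entropy-inequality2} gives $\Delta(\bar\pi)\le\frac{d_1}{2}\cdot\frac{\tilde c}{L-1}(1+o(1))$, linear in $\tilde c$. The stated bound carries $2^{\tilde c+1}$ because in the paper the Lemma~\ref{lemma:entropy-inequality2} correction $c_j/n$ sits inside the Bahadur--Ranga Rao exponent and gets exponentiated; your ``weight-ratio $2^{\tilde c}$'' heuristic is a different object and does not produce the $d_1$ constant you would need.
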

\begin{proof}
Observe that $E_f(\bar{\pi})\le I$ by the linearity of expectation.
The first inequality in \eqref{eq:diverg1-DMbound} thus follows by a
``Pythagorean'' inequality~\cite[Ch.~11.6]{cover2006elements}.
The second inequality is proved in Appendix~\ref{app:scaling-bounds}.
\end{proof}

\section{Divergence for Distribution Matchers}
\label{sec:DM-scaling}

The problem of DM, as described in~\cite{schulte2020invertible,schulte2017divergence}, is to
choose a codebook $\set{S}$ of strings $a^n$ so that the divergence
$\diverg{P_{\rv{A}^n}}{\qmf{T}^n}$ is minimized where $P_{\rv{A}^n}=U_{|\set{S}|}$.
Without loss of essential generality, we assume $\supp(\qmf{T})=\set{A}$
and $\qmf{T}\ne U_{L}$. We have $\entrp{P_{\rv{A}^n}}=\log_2 |\set{S}|$, 
and the rate in bits per symbol is $\Rinfo=\frac{1}{n} \log_2 |\set{S}|$.

The paper~\cite[Prop.~3]{schulte2020invertible} shows that optimal $\set{S}$
have all strings $a^n$ satisfying $\crossentrp{\pi(a^n)}{\qmf{T}}\le I$ for some $I$, see~\eqref{eq:prob_condition2}.
The paper~\cite{schulte2017divergence} shows that for $L=2$ the divergence $\diverg{P_{\rv{A}^n}}{\qmf{T}^n}$ of
the best $\set{S}$ scales as $\frac{1}{2}\log_2 n$ with $n$, i.e., binary
DMs cannot provide low divergence.
We prove the same result for general discrete alphabets $\set{A}$ by using
Lemmas \ref{lemma:entropy-inequality2}-\ref{theorem:brr-size} and Theorem \ref{theorem:scaling-bounds}.

\subsection{Small $I$}
\label{sec:DM-small-I}

The case $I\le-\log_2 t_{\rm max}$ makes sense only if equality holds.
$\set{S}$ is thus the set of $a^n$ having only letters with maximal probability $\qmf{T}(a)=t_{\rm max}$.
The positive $P_{\rv{A}^n}(a^n)$ then all have the same value $1/N_{\rm max}^{n}$ where
$N_{\rm max}$ is the number of letters $a$ with probability $\qmf{T}(a)=q_{\rm max}$.
Moreover, we compute 
\begin{align}
  & \diverg{P_{\rv{A}^n}}{\qmf{T}^n} = n \log_2 \frac{1}{N_{\rm max}\,t_{\rm max}} 
\end{align}
and the divergence grows linearly with $n$ since $N_{\rm max}\,t_{\rm max}<1$ by our assumption that $\qmf{T}$ is not uniform.

\subsection{Large $I$}
\label{sec:DM-large-I}

For $I\ge\crossentrp{U_L}{\qmf{T}}$, the following result is the analog of Lemma~17 in \cite[App.~B]{schulte2020invertible}.

\begin{lemma}
\label{lemma:DM-Aary-range}
$\diverg{P_{\rv{A}^n}}{\qmf{T}^n}$ scales linearly with $n$ for $I>\log_2 L$.
\end{lemma}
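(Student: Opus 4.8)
The plan is to reduce the statement to showing that the average empirical pmf $\bar\pi$ (see~\eqref{eq:Pavg}) stays bounded away from $\qmf{T}$, and then to pin down the limit of $\bar\pi$ in two regimes. I would start from the identity~\eqref{eq:I-div-target} with target $\qmf{T}$, namely $\diverg{P_{\rv{A}^n}}{\qmf{T}^n}=n\,\crossentrp{\bar\pi}{\qmf{T}}-\entrp{P_{\rv{A}^n}}$, and combine it with the elementary bound $\entrp{P_{\rv{A}^n}}\le n\,\entrp{\bar\pi}$. The latter holds because entropy is subadditive, so $\entrp{P_{\rv{A}^n}}\le\sum_{i=1}^{n}\entrp{P_{\rv{A}_i}}$ for the one-letter marginals $P_{\rv{A}_i}$ of $P_{\rv{A}^n}$, while $\frac1n\sum_i P_{\rv{A}_i}=\bar\pi$ and $\entrp{\cdot}$ is concave. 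Using also $\entrp{P_{\rv{A}^n}}\ge0$ and $\crossentrp{\bar\pi}{\qmf{T}}\le-\log_2 t_{\rm min}$ one gets
\begin{equation}
  n\,\diverg{\bar\pi}{\qmf{T}}\;\le\;\diverg{P_{\rv{A}^n}}{\qmf{T}^n}\;\le\;n\,\crossentrp{\bar\pi}{\qmf{T}}\;\le\;-\,n\log_2 t_{\rm min},
  \label{eq:plan-sandwich}
\end{equation}
so it suffices to prove $\liminf_{n\to\infty}\diverg{\bar\pi}{\qmf{T}}>0$.

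Next I would split on whether $U_L\in\set{E}=\{P:\crossentrp{P}{\qmf{T}}\le I\}$; note $\crossentrp{U_L}{\qmf{T}}=\log_2 L+\diverg{U_L}{\qmf{T}}>\log_2 L$, so for $I>\log_2 L$ both cases are possible. When $I<\crossentrp{U_L}{\qmf{T}}$, the hypotheses of Corollary~\ref{corollary:brr} and Theorem~\ref{theorem:scaling-bounds} hold with $Q=U_L$ and $f=-\log_2\qmf{T}$: indeed $q_{\rm min}=1/L>0$ and $-\log_2 t_{\rm max}<\log_2 L<I<\crossentrp{U_L}{\qmf{T}}=E_f(U_L)$, the first strict inequality using $t_{\rm max}>1/L$ (which holds since $\qmf{T}\ne U_L$). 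Theorem~\ref{theorem:scaling-bounds} then gives $\diverg{\bar\pi}{P^*}\le\tfrac1n 2^{\tilde c+1}(1+o(1))\to0$, so $\bar\pi\to P^*$ (by Pinsker's inequality, or by Lemma~\ref{lemma:entropy-inequality2}), and since $\supp(\qmf{T})=\set{A}$ the function $P\mapsto\diverg{P}{\qmf{T}}$ is continuous on the simplex, whence $\diverg{\bar\pi}{\qmf{T}}\to\diverg{P^*}{\qmf{T}}$. Finally $\diverg{P^*}{\qmf{T}}>0$, because by Lemma~\ref{lemma:Pstar-explicit} one has $P^*(a)\propto\qmf{T}(a)^\tau$ with $\tau>0$ and $\crossentrp{P^*}{\qmf{T}}=I$, and $P^*=\qmf{T}$ would force $\tau=1$ and hence $I=\crossentrp{\qmf{T}}{\qmf{T}}=\entrp{\qmf{T}}<\log_2 L$, contradicting $I>\log_2 L$. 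So \eqref{eq:plan-sandwich} gives $\diverg{P_{\rv{A}^n}}{\qmf{T}^n}\ge n\,(\diverg{P^*}{\qmf{T}}+o(1))$, which is linear in $n$; in fact, inserting Lemma~\ref{theorem:brr-size} (so that $\entrp{P_{\rv{A}^n}}=\log_2|\set{S}|=n\,\entrp{P^*}+O(\log n)$) together with $\crossentrp{\bar\pi}{\qmf{T}}=E_f(\bar\pi)\to I$ sharpens this to $\diverg{P_{\rv{A}^n}}{\qmf{T}^n}=n\,\diverg{P^*}{\qmf{T}}\,(1+o(1))$, since $I-\entrp{P^*}=\diverg{P^*}{\qmf{T}}$.

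When $I\ge\crossentrp{U_L}{\qmf{T}}$ we have $U_L\in\set{E}$, and $U_L$ lies in the relative interior of $\set{E}$ when $I>\crossentrp{U_L}{\qmf{T}}$ (by continuity of $\crossentrp{\cdot}{\qmf{T}}$). Since $P_{\rv{A}^n}$ is the law of $\rv{A}^n\sim U_L^{n}$ conditioned on $\{\pi(\rv{A}^n)\in\set{E}\}$, the weak law of large numbers gives $\pi(\rv{A}^n)\to U_L$ in probability under $U_L^{n}$, hence $\Pr[\pi(\rv{A}^n)\in\set{E}]\to1$; as $\E{U_L^{n}}{\pi(\rv{A}^n)}=U_L$ and $\ell_1$-distances between pmfs are at most $2$, the law of total expectation gives $\done{\bar\pi}{U_L}\le 2\,\Pr[\pi(\rv{A}^n)\notin\set{E}]\big/\Pr[\pi(\rv{A}^n)\in\set{E}]\to0$, so $\diverg{\bar\pi}{\qmf{T}}\to\diverg{U_L}{\qmf{T}}>0$ and \eqref{eq:plan-sandwich} again yields linear growth. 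The one remaining value $I=\crossentrp{U_L}{\qmf{T}}$ (where $U_L\in\partial\set{E}$) is handled the same way after using the central limit theorem — or the conditional limit theorem of~\cite[Ch.~11.6]{cover2006elements} — to keep $\Pr[\pi(\rv{A}^n)\in\set{E}]$ bounded away from $0$. The step I expect to be most delicate is precisely this second regime, and especially the threshold $I=\crossentrp{U_L}{\qmf{T}}$: there $\set{E}$ may have empty interior and the convenient large-deviation tools (Corollary~\ref{corollary:brr}, Theorem~\ref{theorem:scaling-bounds}) require the strict inequality $I<E_f(U_L)$, so the argument must rely directly on concentration of $\pi(\rv{A}^n)$ rather than on those results.
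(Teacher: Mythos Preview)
Your proposal is correct but follows a somewhat different route than the paper. Both arguments start from the identity~\eqref{eq:I-div-target} and, for $\log_2 L<I<\crossentrp{U_L}{\qmf{T}}$, both invoke Theorem~\ref{theorem:scaling-bounds} to get $\bar\pi\to P^*$. The paper then uses only the crude bound $\entrp{P_{\rv{A}^n}}\le n\log_2 L$ together with $\crossentrp{\bar\pi}{\qmf{T}}\to\crossentrp{P^*}{\qmf{T}}=I>\log_2 L$, which immediately gives linear growth of~\eqref{eq:I-div-target}; you instead use the sharper $\entrp{P_{\rv{A}^n}}\le n\,\entrp{\bar\pi}$ to lower-bound the divergence by $n\,\diverg{\bar\pi}{\qmf{T}}$ and then argue $\diverg{P^*}{\qmf{T}}>0$. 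The bigger divergence is in how the regime $I\ge\crossentrp{U_L}{\qmf{T}}$ is handled: the paper disposes of it in one line via Lemma~\ref{lemma:DM-Aary-increase}, since $\crossentrp{\bar\pi}{\qmf{T}}=E_f(\bar\pi)$ is non-decreasing in $I$ and hence inherits the linear lower bound already established for smaller $I$; you instead work out directly that $\bar\pi\to U_L$ via the WLLN/CLT. Your route is self-contained and yields the explicit constant $\diverg{P^*}{\qmf{T}}$ (resp.\ $\diverg{U_L}{\qmf{T}}$), but the monotonicity lemma makes the paper's proof much shorter and completely sidesteps the boundary case $I=\crossentrp{U_L}{\qmf{T}}$ that you flagged as delicate. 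Note also that at that boundary your displayed bound $\done{\bar\pi}{U_L}\le 2\Pr[\set{E}^c]/\Pr[\set{E}]$ does \emph{not} tend to $0$; you need the concentration argument you allude to (e.g.\ $\Pr[\done{\pi}{U_L}>\epsilon\mid\set{E}]\le \Pr[\done{\pi}{U_L}>\epsilon]/\Pr[\set{E}]\to 0$), so if you keep your approach, make that step explicit.
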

\begin{proof}
We have $\log_2 L < \crossentrp{U_{L}}{\qmf{T}}$ because $\qmf{T}\ne U_L$.
Moreover, if $\log_2L<I<\crossentrp{U_{L}}{\qmf{T}}$ then
Theorem~\ref{theorem:scaling-bounds} gives $\bar{\pi}\rightarrow P^*$ as $n\rightarrow\infty$
and thus $\crossentrp{\bar{\pi}}{\qmf{T}}\rightarrow\crossentrp{P^*}{\qmf{T}}=I$
since $\crossentrp{P^*}{\qmf{T}}=I$ by Lemma~\ref{lemma:Pstar-explicit}.
But $\entrp{P_{\rv{A}^n}}\le n\log_2L$ and thus the divergence \eqref{eq:I-div-target}
grows linearly with $n$. Finally, Lemma~\ref{lemma:DM-Aary-increase} states
that $\crossentrp{\bar{\pi}}{\qmf{T}}$ is non-decreasing in $I$.
\end{proof}

\subsection{Intermediate $I$}
\label{sec:DM-intermediate-I}

For $-\log_2 t_{\rm max}<I<\crossentrp{U_L}{\qmf{T}}$, Lemma~\ref{theorem:brr-size} gives
\begin{align}
    \entrp{P_{\rv{A}^n}} = -\frac{1}{2}\log_2 n + n \entrp{P^*} + \log_2 \frac{b}{\sqrt{2\pi}} + o(1) .
    \label{eq:logS-bound}
\end{align}
Inserting~\eqref{eq:logS-bound} into~\eqref{eq:I-div-target} gives
\begin{align}
  \diverg{P_{\rv{A}^n}}{\qmf{T}^n}
  & = \frac12 \log_2 n - n\left[ \entrp{P^*} - \entrp{\bar{\pi}} \right] \nonumber \\
  & \quad +n\diverg{\bar{\pi}}{\qmf{T}} - \log_2 \frac{b}{\sqrt{2\pi}} + o(1)
  \label{eq:div-expand3}
\end{align}
which is the analog of~\cite[Eq.~(107)]{schulte2020invertible}.
Theorem~\ref{theorem:scaling-bounds} with $Q=U_L$ now gives
\begin{align}
  0 \le \Delta(\bar{\pi}) = \entrp{P^*} - \entrp{\bar{\pi}} < \frac{1}{n} \, 2^{\tilde{c}+1}\,(1+o(1)).
  \label{eq:entropy-expand}
\end{align}
Inserting \eqref{eq:entropy-expand} into \eqref{eq:div-expand3} and taking $n\rightarrow\infty$, we have
\begin{align}
  & - 2^{\tilde{c}+1} - \log_2 \frac{b}{\sqrt{2\pi}} \nonumber \\
  & \quad \le \liminf_{n\to\infty} \left( \diverg{P_{\rv{A}^n}}{\qmf{T}^n} - \frac12 \log_2 n - n\diverg{\bar{\pi}}{\qmf{T}} \right) \nonumber\\
  & \quad \le - \log_2 \frac{b}{\sqrt{2\pi}}.
  \label{eq:diverg-liminf}
\end{align}
The divergence $\diverg{P_{\rv{A}^n}}{\qmf{T}^n}$ thus grows at least as $\frac12 \log_2 n$ with $n$. 
The next result shows that one can achieve this growth.

\begin{theorem}\label{theorem:DM-divergence}
Finite-alphabet block DMs that minimize the divergence $\diverg{P_{\rv{A}^n}}{\qmf{T}^n}$ have
$\diverg{P_{\rv{A}^n}}{\qmf{T}^n}$ that grows as $\frac12 \log_2 n$ with $n$
and $\Rinfo\rightarrow\entrp{\qmf{T}}$ as $n\rightarrow\infty$.
\end{theorem}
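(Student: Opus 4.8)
The plan is to split the analysis by the threshold $I$, exactly matching the three regimes already set up in Sections~\ref{sec:DM-small-I}--\ref{sec:DM-intermediate-I}, and to show that the minimum divergence over all valid $I$ is achieved in the intermediate regime where it grows like $\frac12\log_2 n$. First I would recall that by \cite[Prop.~3]{schulte2020invertible} an optimal codebook $\set{S}$ is a threshold set $\{a^n:\crossentrp{\pi(a^n)}{\qmf{T}}\le I\}$ for some $I$, so it suffices to optimize over $I$. For $I\le-\log_2 t_{\rm max}$ (Section~\ref{sec:DM-small-I}) and for $I>\log_2 L$ (Lemma~\ref{lemma:DM-Aary-range}) the divergence grows linearly in $n$, hence these regimes are strictly suboptimal for large $n$ and can be discarded. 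This leaves the intermediate regime $-\log_2 t_{\rm max}<I<\crossentrp{U_L}{\qmf{T}}=\log_2 L\cdot(\text{something})$; more precisely $\crossentrp{U_L}{\qmf{T}}>\log_2 L$ since $\qmf{T}\ne U_L$, so the interval treated in Section~\ref{sec:DM-intermediate-I} together with Lemma~\ref{lemma:DM-Aary-range} covers all remaining cases.

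In the intermediate regime the lower bound is already in hand: \eqref{eq:diverg-liminf} shows $\diverg{P_{\rv{A}^n}}{\qmf{T}^n}\ge \frac12\log_2 n + n\diverg{\bar\pi}{\qmf{T}} - O(1)$, and since $\diverg{\bar\pi}{\qmf{T}}\ge 0$ this gives the claimed $\frac12\log_2 n$ growth as a lower bound. For the matching upper bound (achievability) I would \emph{choose} $I$ so that $\crossentrp{P^*}{\qmf{T}}=\crossentrp{\bar\pi}{\qmf{T}}\to\entrp{\qmf{T}}$, i.e.\ drive $P^*\to\qmf{T}$ by letting $I\downarrow \entrp{\qmf{T}}$ from above (equivalently $\tau\to 0$ in Lemma~\ref{lemma:Pstar-explicit} with $Q=U_L$). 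Then $\diverg{\bar\pi}{\qmf{T}}\to 0$: indeed $\diverg{\bar\pi}{\qmf{T}}\le \diverg{\bar\pi}{P^*}+(\text{correction})$ via Lemma~\ref{lemma:entropy-inequality2} applied with $P=\bar\pi$, $Q=P^*$, $R=\qmf{T}$, using that $\done{\bar\pi}{P^*}\le \diverg{\bar\pi}{P^*}\cdot\text{const}\to 0$ by Theorem~\ref{theorem:scaling-bounds} (Pinsker plus the $1/n$ bound), and $p^*_{\rm min}$ stays bounded away from $0$ as long as $\tau$ stays in a compact set. So along this choice of $I$, \eqref{eq:div-expand3} reads $\diverg{P_{\rv{A}^n}}{\qmf{T}^n}=\frac12\log_2 n + o(n)\cdot\text{(vanishing)}+O(1)$; with a little care the $n\diverg{\bar\pi}{\qmf{T}}$ term is $o(\log_2 n)$ provided the approach of $P^*$ to $\qmf{T}$ is not too fast, and the residual entropy gap $n[\entrp{P^*}-\entrp{\bar\pi}]$ is $O(1)$ by \eqref{eq:entropy-expand}. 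Hence $\diverg{P_{\rv{A}^n}}{\qmf{T}^n}=\frac12\log_2 n+O(1)$, matching the lower bound.

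For the rate statement, $\Rinfo=\frac1n\log_2|\set{S}|=\frac1n\entrp{P_{\rv{A}^n}}$, and \eqref{eq:logS-bound} gives $\Rinfo=\entrp{P^*}-\frac{1}{2n}\log_2 n+O(1/n)\to\entrp{P^*}$. With the above choice $P^*\to\qmf{T}$, continuity of entropy yields $\entrp{P^*}\to\entrp{\qmf{T}}$, so $\Rinfo\to\entrp{\qmf{T}}$; and one checks this is the best achievable rate since $\crossentrp{\bar\pi}{\qmf{T}}\to\entrp{\qmf{T}}$ forces $\bar\pi\to\qmf{T}$, hence $\entrp{P^*}=\entrp{\bar\pi}+O(1/n)\to\entrp{\qmf{T}}$.

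The main obstacle I anticipate is the coupled limit: to get both the $\frac12\log_2 n$ divergence \emph{and} $\Rinfo\to\entrp{\qmf{T}}$, one must let $I=I_n$ depend on $n$ and verify that the $(1+o(1))$ and $O(1)$ terms in Lemma~\ref{theorem:brr-size} and Theorem~\ref{theorem:scaling-bounds} are uniform as $I_n\to\entrp{\qmf{T}}$ — i.e.\ that $b$, $\alpha$, $\tilde c$ stay bounded and bounded away from their degenerate values. This requires tracking $\tau=\tau_n\to 0$ and confirming $p^*_{\rm min}$ does not collapse (it tends to $t_{\rm min}>0$) and that $\tilde c$ in \eqref{eq:limc} stays bounded (it tends to $(L-1)\log_2\frac{t_{\rm max}}{t_{\rm min}}$). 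One must also ensure the lattice/non-lattice caveat in Theorem~\ref{theorem:brr} does not spoil the achievability bound, but since the right-hand side of \eqref{eq:brr-div} is always an upper bound on $\Pr[\rv S_n\le I]$ and $b_n$ differs from $b$ only by a bounded factor, this only affects the $O(1)$ term. Modulo this uniformity bookkeeping, the theorem follows by combining \eqref{eq:diverg-liminf}, \eqref{eq:logS-bound}, and the continuity argument above.
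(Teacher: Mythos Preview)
Your overall strategy matches the paper's: the converse comes from \eqref{eq:diverg-liminf} together with the linear growth in the small-$I$ and large-$I$ regimes, and achievability proceeds by choosing $I$ so that $P^*$ coincides with $\qmf{T}$. There is one minor slip: with $Q=U_L$ and $f=-\log_2\qmf{T}$, \eqref{eq:Pstar-explicit} gives $P^*(a)\propto \qmf{T}(a)^{\tau}$, so $P^*=\qmf{T}$ corresponds to $\tau=1$, not $\tau\to 0$ (the latter would send $P^*\to U_L$).

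The substantive gap is in your control of $n\,\diverg{\bar\pi}{\qmf{T}}$. You invoke Lemma~\ref{lemma:entropy-inequality2} with $P=\bar\pi$, $Q=P^*$, $R=\qmf{T}$, but that lemma bounds $\diverg{Q}{R}-\diverg{P}{R}$, i.e.\ a perturbation of the \emph{first} argument, not the decomposition $\diverg{\bar\pi}{\qmf{T}}\le \diverg{\bar\pi}{P^*}+(\text{correction})$ that you state. Even after repairing the direction, your route through Pinsker and Theorem~\ref{theorem:scaling-bounds} yields only $\done{\bar\pi}{P^*}=O(n^{-1/2})$, so the resulting bound on $n\,\diverg{\bar\pi}{\qmf{T}}$ is $O(\sqrt{n})$. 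Plugged into \eqref{eq:div-expand3} this gives $\diverg{P_{\rv{A}^n}}{\qmf{T}^n}=\tfrac12\log_2 n+O(\sqrt{n})$, which does not match the lower bound.

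The paper avoids this detour by making $P^*$ literally equal to a quantized target: round $\qmf{T}$ to an $n$-type $\tilde Q_{\rv{T}}$ with $\done{\tilde Q_{\rv{T}}}{\qmf{T}}\le L/(2n)$, define the codebook via the threshold $\crossentrp{\pi(a^n)}{\tilde Q_{\rv{T}}}\le I$ with $I=\entrp{\tilde Q_{\rv{T}}}$, so that $P^*=\tilde Q_{\rv{T}}$ exactly. Then the first inequality in Theorem~\ref{theorem:scaling-bounds} \emph{directly} gives $n\,\diverg{\bar\pi}{\tilde Q_{\rv{T}}}\le n\,\Delta(\bar\pi)<2^{\tilde c+1}(1+o(1))=O(1)$, with no Pinsker loss; the passage from target $\tilde Q_{\rv{T}}$ to target $\qmf{T}$ costs only $O(1)$ via \eqref{eq:I-div-target} and the $O(1/n)$ proximity. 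Because $\tilde Q_{\rv{T}}\in\set{P}_n$, the threshold $I=\entrp{\tilde Q_{\rv{T}}}$ is attained by some $a^n$, so the lattice caveat in Theorem~\ref{theorem:brr} is automatically handled. This also makes your coupled-limit concern moot: $\entrp{\qmf{T}}$ lies strictly inside the interval~\eqref{eq:I-bounds2}, so the constants $b$, $\alpha$, $\tilde c$ remain bounded, and your remark that the approach of $P^*$ to $\qmf{T}$ must be ``not too fast'' is unnecessary.
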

\begin{proof}
We have already proved the converse. For the coding theorem,
one may quantize $\qmf{T}$ to a $\tilde{Q}_{\rv{T}}\in\set{P}_n$ satisfying
$\done{\tilde{Q}_{\rv{T}}}{\qmf{T}}\le L/(2n)$~\cite[Prop.~2]{boecherer-geiger-IT16}.
Now choose $I=\entrp{\tilde{Q}_{\rv{T}}}$ so that $P^*=\tilde{Q}_{\rv{T}}$.
Theorem~\ref{theorem:scaling-bounds} gives
$n\diverg{\bar{\pi}}{\tilde{Q}_{\rv{T}}}< 2^{\tilde{c}+1}(1+o(1))$, and using \eqref{eq:div-expand3}
and the left-hand side of \eqref{eq:entropy-expand} we find that the divergence $\diverg{P_{\rv{A}^n}}{\qmf{T}^n}$ grows as
$\frac12 \log_2 n$. Furthermore, $\Rinfo\rightarrow\entrp{\qmf{T}}$ by 
Lemmas~\ref{lemma:entropy-inequality2} and~\ref{theorem:brr-size}.
\end{proof}

%

\section*{Acknowledgement}
The author wishes to thank P.~Schulte and D.~Lentner for discussions, and the reviewers for comments that improved the presentation.
This work was supported by the German Research Foundation (DFG) under Grant {KR 3517/9-1}.

\bibliographystyle{IEEEtran}
\bibliography{IEEEabrv,conf-jnls,references}

\setcounter{section}{0}
\renewcommand{\thesection}{\Alph{section}}
\renewcommand{\thesubsection}{\arabic{subsection}}
\renewcommand{\appendix}[1]{%
  \refstepcounter{section}%
  \par\begin{center}%
    \begin{sc}%
      Appendix \thesection\par\nobreak%
      #1%
    \end{sc}%
  \end{center}\nobreak%
}

\appendix{Proof of Lemma~\ref{lemma:entropy-inequality2}}
\label{app:entropy-inequality2}
The condition $p_{\rm min}>d_1/2$ implies $q_{\rm min}>0$ and therefore
$\supp(P)=\supp(Q)=\set{A}$. Consider the pmfs $P$ and
$Q=P+\Delta$ where $\sum_a \Delta(a)=0$ and $\sum_a |\Delta(a)|=d_1$. Define
\begin{align}
  \Delta_+ = \sum_{a:\,\Delta(a)>0} \Delta(a), \quad
  \Delta_- = \sum_{a:\,\Delta(a)<0} \Delta(a)
\end{align}
and observe that $\Delta_{+}=-\Delta_{-}=d_1/2$. Now expand
\begin{align}
  & \diverg{Q}{R} - \diverg{P}{R} \nonumber \\
  & = - \diverg{P}{Q} + \sum_{a: \Delta(a)\ne0} \Delta(a) \log_2 \frac{P(a) + \Delta(a)}{R(a)}
  \label{eq:app-ei2-1}
\end{align}
and compute the bounds
\begin{align}
  & \sum_{a: \Delta(a)>0} \Delta(a) \underbrace{\log_2 \frac{P(a) + \Delta(a)}{R(a)}}_{\le \log_2 \frac{p_{\rm max} + d_1/2}{r_{\rm min}}}
  \le \frac{d_1}{2} \log_2 \frac{p_{\rm max} + d_1/2}{r_{\rm min}} \label{eq:app-ei2-2} \\
  & \sum_{a: \Delta(a)<0} \Delta(a) \underbrace{\log_2 \frac{P(a) + \Delta(a)}{R(a)}}_{_{\ge \log_2 \frac{p_{\rm min} - d_1/2}{r_{\rm max}}}}
  \le -\frac{d_1}{2} \log_2 \frac{p_{\rm min} - d_1/2}{r_{\rm max}}. \label{eq:app-ei2-3}
\end{align}

\appendix{Proof of Lemma~\ref{lemma:d1-bound}}
\label{app:d1-bound}
The proof applies the approach in~\cite[Sec.~III and Sec.~VI.C]{boecherer-geiger-IT16} with small changes. 
Order the values $P^*(i)$, $i=1,\dots,L$, such that $f(i)\le f(j)$ if $i<j$. Define the pmf $P$ via
\begin{align}
  P(i) = \left\{
  \begin{array}{ll}
    \lfloor n P^*(i) \rfloor/n, & i=2,\dots,L \\
    1 - \sum_{i=2}^{L} P(i), & i=1
  \end{array} \right.
\end{align}
and note that $P\in\set{P}_n$. Define $e(i)=P^*(i)-P(i)$ for all $i$.
We have $0\le e(i)<\frac1n$ for $i=2,\dots,L$ and $\sum_{i=1}^{L} e(i)=0$
so that $-(L-1)/n<e(1)\le0$. We thus have
  \eqref{eq:lemma-d1-bound} and
\begin{align}
    E_f(P) = \underbrace{E_f(P^*)}_{=I} - \sum_{i=1}^{L} e(i) \underbrace{f(i)}_{\ge f(1)} \le I
    \label{eq:lemma-d1bound-crossentrp}
\end{align}
where the last step uses $e(i)\ge0$ for $i=2,\dots,n$. We thus have $P\in\set{E}\cap\set{P}_n$.

\appendix{Proof of Lemma~\ref{lemma:DM-Aary-increase}}
\label{app:DM-Aary-increase}
Suppose we increase $I$ from $I_1$ to $I_2$ with
the corresponding string sets $\set{S}_1$ and $\set{S}_2$
and average empirical pmfs $\bar{\pi}_1$ and $\bar{\pi}_2$, respectively. 
Suppose $I_2-I_1$ is sufficiently large so that $\set{S}_2\setminus\set{S}_1\ne\emptyset$.
The definition \eqref{eq:Pavg} gives
\begin{align}
  E_f(\bar{\pi}_2)
  & = \sum_{a^n\in\set{S}_2} \frac{Q^n(a^n)}{Q^n(\set{S}_2)} E_f(\pi(a^n)) \nonumber \\
  & = \frac{Q^n(\set{S}_1)}{Q^n(\set{S}_2)} E_f(\bar{\pi}_1)
  + \sum_{a^n\in\set{S}_2\setminus\set{S}_1} \frac{Q^n(a^n)}{Q^n(\set{S}_2)} E_f(\pi(a^n)) \nonumber \\
  & > E_f(\bar{\pi}_1)
\end{align}
where the last step follows by $E_f(\pi(a^n)) > I_1 \ge E_f(\bar{\pi}_1)$
for $a^n\in\set{S}_2\setminus\set{S}_1$. 

\appendix{Proof of Theorem~\ref{theorem:scaling-bounds}}
\label{app:scaling-bounds}

There may be no $P\in\set{E}\cap\set{P}_n$ such that $E_f(P)=I$.
However, for each $n$ there is an $I^*$ such that $E_f(P) \le I^* \le I$ for all
$P\in\set{E}\cap\set{P}_n$, and with $E_f(P)=I^*$ for some $P\in\set{E}\cap\set{P}_n$. 
Define the optimized pmf (note that $I^*$ replaces $I$)
\begin{align}
    P^* = & \underset{P}{\arg \min} \; \diverg{P}{Q} \nonumber \\
    & \text{subject to } E_f(P) = I^*
    \label{eq:Pstar-app}.
\end{align}

Next, by convexity of $\Delta(P)$ we have
\begin{align}
  \Delta(\bar{\pi}) & \le \sum_{a^n \in \set{S}} P_{\rv{A}^n}(a^n) \, \Delta(\pi(a^n)).
  \label{eq:diverg1-bound}
\end{align}
Now partition $\set{S}$ into the sets $\{\set{S}_j\}_{j=0}^N$
where $\set{S}_j$ is the set of $a^n\in\set{S}$ satisfying
\begin{align}
  j\,\delta \le \Delta(\pi(a^n)) < (j+1)\,\delta
  \label{eq:app-diverg1-bounds}
\end{align}
for small positive $\delta$ and sufficiently large $N$. Let $\Delta_{\max}$ be an upper bound on $\Delta(P)$.
We select a $N_1<N$ and use \eqref{eq:app-diverg1-bounds} and $\Delta(P) \le \Delta_{\max} < \infty$ to write
\begin{align}
  & \Delta(\bar{\pi}) < \left( \sum_{j=0}^{N_1-1}
    P_{\rv{A}^n}(\set{S}_j) (j+1) \right) \delta + P_{\rv{A}^n}\left( \cup_{j=N_1}^N \set{S}_j \right) \Delta_{\max} \nonumber \\
  & = \left( \sum_{j=0}^{N_1-1}
    P_{\rv{A}^n} \left( \cup_{i=j}^{N_1-1} \set{S}_i \right) \right) \delta + P_{\rv{A}^n}\left( \cup_{j=N_1}^N \set{S}_j \right) \Delta_{\max} 
  \label{eq:diverg1-bound2}
\end{align}
where the second step follows because $\{\set{S}_j\}_{j=0}^N$ is a partition.

Let $\set{T}_j=\cup_{i=j}^{N} \set{S}_i$ and note that $\set{T}_j\ne\emptyset$
for small $\delta$ and $j$. Define the following parameters of $\set{T}_j$:
\begin{align}
  & a_j^n = \underset{a^n\in\set{T}_j}{\arg \max} \; E_f(\pi(a^n)), \quad I_j^* = E_f(\pi(a_j^n)), \\
  & P_j^* = \underset{P}{\arg \min} \; \diverg{P}{Q} \text{ subject to } E_f(P) = I_j^*
\end{align}
and observe that $I_0^*=I^*$ and $P_0^*=P^*$. Observe also that $a^n\in\set{T}_j$
if and only if $E_f(\pi(a^n))\le I_j^*$ because $\diverg{P_j^*}{Q}$ increases as $I_j^*$ decreases.
In fact, since $\tau$ is the Lagrange multiplier for the optimization problem in \eqref{eq:Pstar}, we have
\begin{align}
   \frac{d\,\diverg{P^*}{Q}}{d\, I} = - \tau < 0.
\end{align}

Let $p_{{\rm min},j}^*$ and $p_{{\rm max},j}^*$ be the respective minimum and maximum probabilities of $P_j^*$. For $j\,\delta$ not too large we have $p_{{\rm min},j}^*>0$, see Lemma~\ref{lemma:Pstar-explicit}. We further define
\begin{align}
  \hat{a}_j^n = \underset{a^n\in\set{S}_j}{\arg \min} \; \diverg{\pi(a^n)}{Q}
\end{align}
and note that $\hat{a}_j^n \ne a_j^n$ in general, see Fig.~\ref{fig:psimplex2}.
Next, if $n > (L-1)/ p_{{\rm min},j}^*$ then Lemmas~\ref{lemma:entropy-inequality2} and~\ref{lemma:d1-bound}
imply that there is an $\tilde{a}_j^n\in\set{S}_j$ such that
\begin{align}
  & \done{\pi(\tilde{a}_j^n)}{P_j^*} \le \frac{2 (L-1)}{n} \\
  & \diverg{\pi(\tilde{a}_j^n)}{Q} - \diverg{P_j^*}{Q} \le \frac{c_j}{n}
  \label{eq:diverg-bound2}
\end{align}
where
\begin{align}
  c_j = (L-1) \log_2 \left( \frac{p_{\rm max,j}^* + \frac{L-1}{n}}{p_{\rm min,j}^* - \frac{L-1}{n}}
  \cdot \frac{q_{\rm max}}{q_{\rm min}} \right).
  \label{eq:Deltaj}
\end{align}
Note that $\tilde{a}_j^n \ne \hat{a}_j^n$ in general, see Fig.~\ref{fig:psimplex2}.

By definition we have $\diverg{\pi(\hat{a}_j^n)}{Q} \le \diverg{\pi(\tilde{a}_j^n)}{Q}$ so that
\eqref{eq:app-diverg1-bounds} and \eqref{eq:diverg-bound2} give
\begin{align}
  & j \delta \le \Delta(\pi(\hat{a}_j^n)) \le \Delta(\pi(\tilde{a}_j^n)) \le \Delta(P_j^*) + \frac{c_j}{n} .
  \label{eq:app-diverg-diff-bound}
\end{align}
We can thus use Theorem~\ref{theorem:brr} to write
\begin{align}
   P_{\rv{A}^n} \left( \set{T}_j \right)
   & = \frac{\Pr\left( E_f(\pi(A^n)) \le I_j^* \right)}{\Pr\left(  E_f(\pi(A^n)) \le I_0^* \right)} \nonumber \\
   & \le 2^{-n \Delta(P_j^*)} \frac{b(I_j^*)}{b(I_0^*)} \, (1+o(1)) \label{eq:Sj-diverg-bound}
\end{align}
where the $b(I_j^*)$ are the $b$s in Theorem~\ref{theorem:brr} corresponding to $P_j^*$.
The inequality in \eqref{eq:Sj-diverg-bound} accounts for lattice RVs
for which one may have $\Pr[S_n=I_j^*]=0$ for $j=1,\dots,N_1$, see Theorem~\ref{theorem:brr}.

We now choose $\delta=1/n$ and $N_1=\lceil \sqrt{n}\, \rceil$ so that the expressions
\eqref{eq:diverg1-bound2}, \eqref{eq:app-diverg-diff-bound}, and \eqref{eq:Sj-diverg-bound} give
\begin{align}
  \Delta(\bar{\pi}) < \left[ \left(\sum_{j=0}^{\infty} 2^{-j} \right) \frac{1}{n} + \Delta_{\max}\, 2^{-\sqrt{n}} \right] b_r\,2^{c}\,(1+o(1))
  \label{eq:diverg1-bound3}
\end{align}
where
\begin{align}
    b_r = \max_{0 \le j \le N_1} \frac{b(I_j^*)}{b(I_0^*)}, \quad c = \max_{0 \le j \le N_1} c_j.
    \label{eq:brc}
\end{align}
Next, since $N_1/n\rightarrow0$ we have $(j+1)\delta \rightarrow 0$ in \eqref{eq:app-diverg1-bounds} and  hence $P_j^*\rightarrow P^*$ for all $j=0,1,\dots,N_1$. We thus have $p_{\rm min,j}^* \rightarrow p_{\rm min}^*$ and $p_{\rm max,j}^* \rightarrow p_{\rm max}^*$. Moreover, by continuity of $\alpha$ and $\tau$ with respect to the threshold $I$, we have $b(I_j^*)\rightarrow b$ for all $j=0,1,\dots,N_1$ and therefore $b_r\rightarrow1$.
Finally, evaluating the sum in \eqref{eq:diverg1-bound3} gives \eqref{eq:diverg1-DMbound}.

\end{document}